\documentclass[12pt]{amsart}

\setlength{\evensidemargin}{-0.2in}
\setlength{\oddsidemargin}{-0.2in} \setlength{\textwidth}{6.7in}
\setlength{\topmargin}{-0.1in} \setlength{\textheight}{8.0in}
\setlength{\parindent}{0pt} \setlength{\parskip}{0.08in}
\setlength{\topsep}{0.15in}
\setlength{\abovedisplayshortskip}{0.15in}
\setlength{\belowdisplayshortskip}{0.15in}
\setlength{\itemsep}{0.15in}

\usepackage{amsmath}
\usepackage{amscd,amssymb}
\usepackage{epsfig}

\usepackage{graphicx}
\usepackage{color}
\usepackage{hyperref}
\usepackage{booktabs}
\usepackage{amsfonts}
\usepackage{mathrsfs}
\usepackage{bm}

\def\qed{\ifmmode\square\else\nolinebreak\hfill
$\Box$\fi\par\vskip12pt}

\newtheorem{thm}{Theorem}[section]
\newtheorem{lemma}[thm]{Lemma}

\newtheorem{proposition}[thm]{Proposition}
\numberwithin{equation}{section}
\numberwithin{thm}{section}

\theoremstyle{definition}

\newtheorem{remark}[thm]{Remark}
\newtheorem{example}[thm]{Example}

\newcommand{\bD}{\mathbb D}\newcommand{\bF}{\mathbb F}

\newcommand{\cB}{\mathcal B}\newcommand{\cC}{\mathcal C}

\newcommand{\cP}{\mathcal P}\newcommand{\cR}{\mathcal R}

\definecolor{Purple}{rgb}{0.5,0,0.5}

  \def\Tr{{\rm Tr}}

\newcommand{\ra}{\rightarrow}

\newcommand{\ol}{\overline}

\newcommand{\<}{\langle}
\renewcommand{\>}{\rangle}

\begin{document}
\pagestyle{plain}
\parindent=19pt

\title{Linear codes of $2$-designs as subcodes of the extended generalized Reed-Muller codes}
\begin{center}
\author{Zhiwen He$^{a}$. Jiejing Wen$^{b,c,*}$
}\end{center}
\address{School of Mathematical Sciences, Zhejiang University, Hangzhou 310027,  China a}
\email{zhiwen$\_$he@zju.edu.cn}

\address{Key Laboratory of Cryptologic Technology and Information Security, Ministry of Education, Shandong University, Qingdao 266237, China b}
\address{School of Cyber Science and Technology, Shandong University, Qingdao 266237, China c}
\email{jjwen@sdu.edu.cn}
\begin{abstract} 
This paper is concerned with the affine-invariant ternary codes which are defined by Hermitian functions. We compute the incidence matrices of 2-designs that are supported by the minimum weight codewords of these ternary codes. The linear codes generated by the rows of these incidence matrix are subcodes of the extended codes of the 4-th order generalized Reed-Muller codes and they also hold 2-designs. Finally, we give the dimensions and lower bound of the minimum weights of these linear codes.
\end{abstract}

\keywords{ternary code, $2$-design, incidence matrix, generalized Reed-Muller code.\\
{\bf  Mathematics Subject Classification 94B15. 05B05. 51E10}\\
{\bf  Funding information: National Natural Science Foundation of China under Grant No. 11771392.}\\
$^*$Correspondence author}

\maketitle


\section{Introduction}
A $t$-design with parameters $(v,k,\lambda)$ is a pair $\mathbb{D}=(\cP,\cB)$ with point set $\cP$ and block set $\cB$, where $\cP$ has size $v$ and each block in $\cB$ is a $k$-subset of $\cP$, such that any $t$ points are contained in $\lambda$ blocks. We only consider simple design, that is a design contains no repeated blocks, with $v>k>\lambda$. Let $q$ be a prime power and $\bF_q$ be a finite field. An $[n,k,d]$ linear code $\cC$ over $\bF_q$ is a $k$-dimensional linear subspace of vector space $\bF_q^n$ with minimum distance $d$. Let $A_i$, $0\leq i\leq n$, denote the number of codewords of weight $i$ in $\cC$. The sequence $(A_0,A_1,\cdots,A_{n})$ is called weight distribution of $\cC$ and $\sum_{i=0}^{n}A_it^i$ is weight enumerator of $\cC$.

The theories of $t$-design and linear code are closely related. Let $\bD$ be a $t$-$(v,k,\lambda)$ design. Let $b$ be the number of blocks in $\cB$. The incidence matrix $M_{\bD}=(m_{ij})$ of $\bD$ is a $b\times v$ matrix where $m_{ij}=1$, if $p_j$ is in $B_i$ and $m_{ij}=0$, otherwise.  The rows of the incidence matrix $M_{\bD}$ can be viewed as vectors of $\bF_q^v$. Then the subspace $\cC(\mathbb{D})$ spanned by these $b$ vectors is called the linear code of $\bD$ over $\bF_q$ of length $n=v$. Let $\cC$ be an $[n,k,d]$ linear code over $\bF_q$ with each codeword indexed by the ordered elements $\{p_0,p_1,\cdots,p_{n-1}\}$.  For any $A_i\neq 0$, let $\cB_i$ be the collection of the supports $\textup{Suppt}(c)=\{p_j:c_j\neq 0,0\leq j\leq n-1\}$ for all $c=(c_0,c_1,\cdots,c_{n-1})\in \mathcal{C}$ with weight $i$, $0\leq i\leq n$. Let $\cP=\{p_0,p_1,\cdots,p_{n-1}\}$. If the pair $(\cP,\cB_i)$ is a $t$-$(v,k,\lambda)$ design with $v=n, k=i$ for some positive integers $\lambda$ and $t\geq 2$, then we call it the support design of the code $\cC$ and denote it by $\mathbb{D}_i(\cC)$.

A number of papers demonstrated that $t$-designs and codes are interesting topics for combinatorics and coding theorists: see \cite{DC1} for a summary of constructive results in $t$-design and linear code. Many infinite families of $2$-designs and $3$-designs have been constructed from codes in different methods, for instance \cite{DC3}$-$\cite{DX2}, \cite{WR}.  But there are only a few examples of $4$-designs and $5$-designs which are obtained from the Golay binary and ternary codes. Recently, the $71$-year-old open problem of the existence of infinity families of linear codes holding $4$-designs is settled by Tang and Ding in \cite{TC} and it remains open whether there exist infinity families of linear codes holding $t$-design with $t\geq5$. Ding, Tang and Tonchev \cite{DC2} studied the linear codes of $2$-designs held in a class of affine-invariant ternary codes. The ternary codes used in their paper are defined by quadratic functions. In this paper, we consider the affine-invariant ternary codes which are defined by Hermitian functions. The linear codes $\cC_3(\bD_d(\cC(2m,3)))$ obtained in this paper are affine-invariant which means that they also hold $2$-designs. Moreover, the new linear codes $\cC_3(\bD_d(\cC(2m,3)))$ have the original codes $\cC(2m,3)$ as its subcodes and it has many other affine-invariant subcodes. This implies that the structure of $\cC_3(\bD_d(\cC(2m,3)))$ is richer than the previous one. We simplify the linear code $\cC_3(\mathbb{D}_d(\cC(2m,3)))$ and get that it is a subcode of the extended code of the $4$-th order generalized Reed-Muller code.

The organization of this paper is as follows. In Section \ref{sec_2}, we introduce some basic knowledge of cyclic codes, generalized Reed-Muller codes, and automorphism group of codes. In Section \ref{sec_3}, we consider the designs that are held in affine-invariant ternary codes and the linear codes that are spanned by the incidence matrices of these designs. We present the generators of the linear code $\cC_3(\bD_d(\cC(2m,3)))$ in Theorem \ref{thm_3.15} and state the dimension and the lower bound of the minimum weight of $\cC_3(\bD_d(\cC(2m,3)))$ in Theorem \ref{thm_3.16}. In Section \ref{sec_4}, we give the proof of the main results that are given in Section \ref{sec_3}. In Section \ref{sec_5}, we conclude this paper.

\section{Preliminaries}\label{sec_2}
\subsection{Cyclic code} 
An $[n,k,d]$ linear code $\cC$ over $\bF_q$ is cyclic code if for each codeword $c=(c_0,c_1,\cdots,c_{n-1})\in \cC$ the shift codeword $(c_1,c_2,\cdots,c_{0})$ is also in $\cC$. We define a residue class ring $\cR_n[x]=\bF_q[x]/(x^n-1)$ and a subset $\cC(x)$ of $\cR_n[x]$ corresponding to the cyclic code $\cC$ \[\cC(x)=\{c_0+c_1x+\cdots+c_{n-1}x^{n-1}\in \cR_n[x]: (c_0,c_1,\cdots,c_{n-1})\in \cC\}.\] There is a bijection between the cyclic code $\cC$ and the subset $\cC(x)$ of $\cR_n(x)$. It is easy to see that $xc(x)\in \cC(x)$ for any $c(x)\in \cC(x)$. Hence $\cC(x)$ forms an ideal in the residue class ring $\cR_n[x]$.  Since $\cR_n[x]$ is a principal ideal domain, $\cC(x)$ is principal and $\cC(x)=\<g(x)\>$ for some monic polynomial $g(x)\in \cR_n[x]$.  We call $g(x)$ the generator polynomial and $h(x)=(x^n-1)/g(x)$ the parity-check polynomial of $\cC$.

Let $n$ be an integer such that gcd$(n,q)=1$. The $q$-cyclotomic coset $C_s$, $0\leq s<n$, of $s$ modulo $n$ is defined by \[
C_s=\{s,sq,\cdots,sq^{r-1}\}(\textup{mod } n),\]where $r$ is the smallest positive integer such that $sq^r\equiv s(\textup{mod } n)$. Note that these distinct $q$-cyclotomic cosets partition the set $\{0,1,\cdots,n-1\}$. Let $m$ be the order of $q$ module $n$, $\xi$ be the primitive element of $\bF_{q^m}$, i.e. $\bF_{q^m}^*=\left \langle \xi \right \rangle$. For each $s$, $0\leq s< n$, the minimal polynomial of $\xi^s$ over $\bF_p$ is $M_{\xi^s}(x)=\Pi_{i\in C_s}(x-\xi^i)$. The generator polynomial can be write as $g(x)=\Pi_{s\in T}M_{\xi^s}(x)$, where $T=\cup_s C_s$ is the union of some $q$-cyclotomic cosets. We call $T$ the defining set of the cyclic code $\cC$. The roots of unity in $Z=\{\xi^i: i\in T\}$ are called zeros of the cyclic code $\cC$ and the roots of unity in $\bF_{q^m}^*\setminus Z$ are nonzeros of $\cC$. We refer readers to \cite{HW} for more details on cyclotomic cosets and minimal polynomials.

The dual code $\cC^{\perp}$ of the cyclic code $\cC$ is defined by \[\cC^{\perp}:=\{c'\in \bF_q|c\cdot c'=0,\textup{ for any }c\in \cC\},\]where $\cdot$ is inner product. We know that $\cC^{\perp}$ has the generator polynomial $x^nh(x^{-1})$.

The following theorem from \cite{HW} shows that the zeros of $\cC^{\perp}$ can be derived from $\cC$.
\begin{thm}[See Thm.4.4.9 \cite{HW}]\label{thm_2.1}
Let $\cC$ be an $[n,k,d]$ cyclic code over $\bF_q$ with generator $g(x)$ and $h(x)=(x^n-1)/g(x)$ be its parity-check polynomial. If $\alpha_1,\cdots,\alpha_k$ are the nonzeros of $\cC$, then $\alpha_1^{-1},\cdots,\alpha_k^{-1}$ are the zeros of $\cC^{\perp}$.
\end{thm}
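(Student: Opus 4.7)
The plan is to exploit the stated description of $\cC^{\perp}$ as having generator polynomial $x^n h(x^{-1})$, and to match the roots of this polynomial (restricted to the $n$-th roots of unity) against the claimed set $\{\alpha_1^{-1},\ldots,\alpha_k^{-1}\}$.

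First I would identify the nonzeros of $\cC$ with the roots of $h(x)$. Because $\gcd(n,q)=1$, the polynomial $x^n-1$ has $n$ distinct roots in $\bF_{q^m}$, and the factorization $x^n-1=g(x)h(x)$ partitions those $n$ roots of unity disjointly into the $n-k$ roots of $g(x)$ (the zeros of $\cC$) and the $k$ roots of $h(x)$. Hence $\alpha_1,\ldots,\alpha_k$ are precisely the roots of $h(x)$.

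Next I would analyze the polynomial $x^n h(x^{-1})$. Writing $h(x)=\sum_{i=0}^{k} h_i x^i$, one computes $x^n h(x^{-1}) = x^{n-k}\cdot h^{*}(x)$, where $h^{*}(x) := x^k h(x^{-1}) = \sum_{i=0}^k h_{k-i}x^i$ is the reciprocal polynomial of $h$. Since the constant term of $x^n-1$ is $-1$, we have $h(0)\neq 0$, so $h^{*}$ has degree exactly $k$. Substituting $x=\alpha_i^{-1}$ yields $h^{*}(\alpha_i^{-1}) = \alpha_i^{-k}h(\alpha_i) = 0$, so each $\alpha_i^{-1}$ is a root of $h^{*}$; by a degree count these exhaust its roots. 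The only additional root of $x^n h(x^{-1})$ is $0$, which is not an $n$-th root of unity and therefore contributes no zero to $\cC^{\perp}$. Consequently the zeros of $\cC^{\perp}$ are exactly $\alpha_1^{-1},\ldots,\alpha_k^{-1}$.

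The only mild obstacle is verifying $h(0)\neq 0$, which underwrites the bijection $\alpha\mapsto\alpha^{-1}$ between the roots of $h$ and those of $h^{*}$; this is immediate from the constant term of $x^n-1$ being nonzero. Everything else is a clean substitution together with the dimension bookkeeping $\deg g = n-k$, $\deg h = k$.
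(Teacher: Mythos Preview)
Your argument is correct. The paper does not supply its own proof of this statement: Theorem~\ref{thm_2.1} is quoted from Huffman--Pless \cite{HW} and used as a black box, so there is nothing in the paper to compare your approach against. What you have written is essentially the standard textbook derivation: identify the nonzeros $\alpha_1,\ldots,\alpha_k$ with the roots of $h(x)$ via the disjoint factorisation $x^n-1=g(x)h(x)$, pass to the reciprocal polynomial $h^{*}(x)=x^{k}h(x^{-1})$, and read off its roots as $\alpha_i^{-1}$. Your remark that $h(0)\neq 0$ (because $g(0)h(0)=-1$) is exactly what is needed to ensure $\deg h^{*}=k$ and that the $\alpha_i^{-1}$ exhaust its roots; and discarding the spurious factor $x^{n-k}$ in $x^{n}h(x^{-1})$ is the right way to reconcile the paper's slightly loose phrasing ``generator polynomial $x^{n}h(x^{-1})$'' with the usual monic least-degree generator, since $x$ is a unit in $\cR_n[x]$ and contributes no $n$-th root of unity as a zero.
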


\begin{proposition}[\cite{HW}]\label{lem_2.2}
Let $\cC_i$ be cyclic codes of length $n$ over $\bF_q$ with defining sets $T_i$ for $1\leq i\leq k$. Then the linear code $\cC_1+\cC_2+\cdots+\cC_k=\{c_1+\cdots+c_k\ |\ c_i\in \cC_i,1\leq i\leq k\}$ has defining set $T_1\cap T_2\cap\cdots\cap T_k$.
\end{proposition}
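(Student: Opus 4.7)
The plan is to reduce to the case $k = 2$ by an obvious induction (using $\cC_1 + \cdots + \cC_k = (\cC_1 + \cdots + \cC_{k-1}) + \cC_k$ together with the set identity $(T_1 \cap \cdots \cap T_{k-1}) \cap T_k = T_1 \cap \cdots \cap T_k$), and then to handle the two-code case by translating everything into the ideal structure of $\cR_n[x] = \bF_q[x]/(x^n - 1)$ via the bijection $\cC \leftrightarrow \cC(x)$ recalled earlier in this subsection.

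First I would write each cyclic code in its polynomial avatar as $\cC_i(x) = \langle g_i(x) \rangle$, where the generator polynomial satisfies $g_i(x) = \prod_{s \in T_i}(x - \xi^s)$ with $\xi$ a primitive $n$-th root of unity in $\bF_{q^m}$. Because $\gcd(n,q) = 1$, the polynomial $x^n - 1$ is separable over $\bF_q$, so each $g_i$ has distinct roots and is determined by its root set. On the code side, $\cC_1 + \cC_2$ corresponds to the ideal sum $\langle g_1(x)\rangle + \langle g_2(x)\rangle$ in the principal ideal ring $\cR_n[x]$, and a standard argument (B\'ezout in $\bF_q[x]$, reduced modulo $x^n - 1$) shows that this ideal sum equals $\langle \gcd(g_1(x), g_2(x))\rangle$.

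Next I would identify the gcd. Since both $g_1$ and $g_2$ divide the squarefree polynomial $x^n - 1$, their gcd in $\bF_q[x]$ is exactly the product of the linear factors $(x - \xi^s)$ that appear in both, namely $\gcd(g_1(x), g_2(x)) = \prod_{s \in T_1 \cap T_2}(x - \xi^s)$. By the definition of defining set recalled in Section~\ref{sec_2}, this is precisely the generator polynomial of a cyclic code with defining set $T_1 \cap T_2$, which finishes the $k = 2$ case.

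There is no real obstacle here; the only subtle point to emphasize is the order-reversal phenomenon that larger codes correspond to smaller generator polynomials (fewer prescribed zeros), so the sum operation on codes becomes the gcd operation on generators, and the union of nonzero sets becomes the intersection of zero (defining) sets. Once this reversal is clearly stated and combined with the fact that $\cR_n[x]$ is a principal ideal ring, the proof is essentially a one-line computation plus the routine induction.
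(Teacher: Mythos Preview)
Your proposal is correct and is exactly the standard argument one finds in Huffman--Pless: reduce to $k=2$ by induction, pass to ideals in $\cR_n[x]$, identify $\langle g_1\rangle + \langle g_2\rangle = \langle \gcd(g_1,g_2)\rangle$, and read off the root set as $T_1\cap T_2$. The paper itself gives no proof of this proposition at all---it is stated as a citation from \cite{HW}---so there is nothing further to compare.
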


For any codeword $c=(c_0,c_1,\cdots,c_{n-1})\in \cC$ we adjoin an extra letter $c_n$ such that $c_0+c_1+\cdots+c_{n}=0$. In this way we can get the extended code $\ol{\cC}$ of $\cC$. If $H$ is the parity check matrix of $\cC$, then the parity check matrix of $\ol{\cC}$ is \[\ol{H}=\begin{bmatrix}\bm{1}&1\\H&\bm{0}\end{bmatrix},\]where $\bm{1}=(1,1,\cdots,1)$ and $\bm{0}=(0,0,\cdots,0)^{\top}$. 
\begin{thm}[See Thm.4.2.1 \cite{HW}]\label{thm_2.3}
Let $\cC$ be a nonzero cyclic code and $g(x)$ be its general polynomial. Then the dimension of $\cC$ is $n-deg(g(x))$.
\end{thm}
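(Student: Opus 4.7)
The plan is to work inside the residue ring $\cR_n[x] = \bF_q[x]/(x^n-1)$ and exploit the correspondence $\cC \leftrightarrow \cC(x) = \langle g(x) \rangle$ already set up in the excerpt. Since $g(x)$ divides $x^n - 1$, I would factor $x^n - 1 = g(x) h(x)$ in $\bF_q[x]$ and set $k := \deg h(x) = n - \deg g(x)$, so that the target equality becomes $\dim_{\bF_q} \cC(x) = k$. The natural candidate basis I would propose for $\cC(x)$ is
\[
\mathcal{B} := \{\, g(x),\, x g(x),\, x^{2} g(x),\, \ldots,\, x^{k-1} g(x) \,\},
\]
viewed as elements of $\cR_n[x]$.

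For linear independence, I would observe that any $\bF_q$-combination $\sum_{i=0}^{k-1} c_i x^i g(x)$ has ordinary polynomial degree at most $(k-1) + (n-k) = n-1$, so it already sits in $\bF_q[x]$ with no reduction modulo $x^n - 1$, and vanishing in $\cR_n[x]$ is then equivalent to vanishing in $\bF_q[x]$. Since $g(x) \ne 0$, this forces $c_0 = c_1 = \cdots = c_{k-1} = 0$. For spanning, I would take an arbitrary $f(x) \in \cC(x)$, which has the form $f(x) \equiv a(x) g(x) \pmod{x^n - 1}$ for some $a(x) \in \bF_q[x]$. Dividing $a(x)$ by $h(x)$, write $a(x) = s(x) h(x) + r(x)$ with $\deg r(x) < k$; then
\[
a(x) g(x) = s(x) g(x) h(x) + r(x) g(x) = s(x)(x^n - 1) + r(x) g(x) \equiv r(x) g(x) \pmod{x^n - 1},
\]
which expresses $f(x)$ as an $\bF_q$-linear combination of the elements of $\mathcal{B}$.

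Finally, the map $(c_0, c_1, \ldots, c_{n-1}) \mapsto c_0 + c_1 x + \cdots + c_{n-1} x^{n-1}$ is an $\bF_q$-linear bijection between $\bF_q^n$ and $\cR_n[x]$ identifying $\cC$ with $\cC(x)$, so the dimension count transfers and yields $\dim_{\bF_q} \cC = |\mathcal{B}| = k = n - \deg g(x)$. I do not anticipate any real obstacle here; the only substantive point is recognizing that the cofactor $h(x)$ of $g(x)$ in $x^n - 1$ provides exactly the right degree bound to make reduction modulo $x^n - 1$ trivial on the span of $\mathcal{B}$, which is what turns the ideal description of $\cC(x)$ into an explicit basis of cardinality $n - \deg g(x)$.
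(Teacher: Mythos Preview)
Your argument is correct and is essentially the standard textbook proof: exhibit $\{g(x), xg(x), \ldots, x^{k-1}g(x)\}$ with $k = n - \deg g(x)$ as an $\bF_q$-basis of the ideal $\langle g(x)\rangle \subseteq \cR_n[x]$, using the degree bound to get independence and division by $h(x)$ to get spanning. Note, however, that the paper does not supply its own proof of this statement at all; Theorem~\ref{thm_2.3} is quoted verbatim from \cite[Thm.~4.2.1]{HW} as a preliminary fact, so there is nothing in the paper to compare against beyond the citation. Your proof is precisely the argument one finds in that reference.
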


\begin{thm}[See Thm.2.8 \cite{DC1}]\label{thm_2.4}
Let $\cC$ be an $[n,k,d]$ code over $\bF_q$. Then $\ol{\cC}$ is an $[n+1,k,\ol{d}]$ linear code, where $\ol{d}=d$ if $d$ is even, and $\ol{d}=d+1$, otherwise.
\end{thm}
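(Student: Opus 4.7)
The plan is to analyse the three parameters $(n+1, k, \bar{d})$ through the extension map $\phi: \cC \to \ol{\cC}$ defined by $\phi(c_0, \ldots, c_{n-1}) = (c_0, \ldots, c_{n-1}, c_n)$ with $c_n := -\sum_{i=0}^{n-1} c_i$. By construction, $\phi(c)$ satisfies the new parity-check row $\bm{1}$ appearing in $\ol{H}$, so it lies in $\ol{\cC}$; conversely, every word of $\ol{\cC}$ is determined by its first $n$ coordinates via that same relation, so $\phi$ is a bijection from $\cC$ onto $\ol{\cC}$.

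Length $n+1$ is immediate, and the dimension will follow formally: $\phi$ is $\bF_q$-linear since the appended coordinate is a linear functional of the input, and it is injective since the first $n$ coordinates determine the whole, so $\dim \ol{\cC} = \dim \cC = k$. The heart of the argument is the minimum distance. I would work from the identity $\text{wt}(\phi(c)) = \text{wt}(c) + \delta_c$, where $\delta_c \in \{0,1\}$ equals $1$ exactly when $c_n \neq 0$, that is, when $\sum_{i=0}^{n-1} c_i \neq 0$ in $\bF_q$. This immediately yields $\text{wt}(\phi(c)) \geq \text{wt}(c) \geq d$ for every nonzero $c$, so $\bar{d} \geq d$, while any minimum weight $c$ gives $\text{wt}(\phi(c)) \leq d+1$, so $\bar{d} \leq d+1$.

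To pin down which of the two values is attained in the binary setting that drives the stated dichotomy, I would use that $\delta_c$ equals the parity of $\text{wt}(c)$ over $\bF_2$. If $d$ is even, every weight-$d$ word has $\delta_c = 0$ and so extends to weight $d$, giving $\bar{d} = d$. If $d$ is odd, every weight-$d$ word has $\delta_c = 1$ and so extends to weight $d+1$, while any heavier word already has extended weight $\geq d+1$; hence $\bar{d} = d+1$. I expect no serious obstacle beyond this short case split. The only subtlety worth flagging is that the clean parity dichotomy is specific to $q = 2$; for $q \geq 3$ the same bookkeeping still gives $\bar{d} \in \{d, d+1\}$, but the attained value depends on whether some minimum weight word has entries summing to $0$ in $\bF_q$, not on the parity of $d$ itself, so applications in the ternary body of the paper will need the finer version of the identity above rather than the parity shortcut.
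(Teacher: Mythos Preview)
The paper does not prove this statement; it is quoted as Theorem~2.8 from \cite{DC1} and used as a black box, so there is no in-paper proof to compare against.

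Your argument is the standard one and is correct: the extension map $\phi$ is an $\bF_q$-linear bijection onto $\ol{\cC}$, giving length $n+1$ and dimension $k$, and the weight identity $\mathrm{wt}(\phi(c)) = \mathrm{wt}(c) + [\,\sum_i c_i \ne 0\,]$ immediately pins $\bar d \in \{d, d+1\}$. You are also right to flag that the parity dichotomy ``$\bar d = d$ if $d$ even, $\bar d = d+1$ if $d$ odd'' is valid only for $q=2$, since over $\bF_2$ the coordinate sum equals the weight modulo $2$. Over $\bF_q$ with $q \ge 3$ the statement as printed is false in general (e.g.\ the ternary repetition code of length $3$ has $d=3$ odd but extends to a word of weight $3$, so $\bar d = 3$). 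The correct general conclusion is only $\bar d \in \{d, d+1\}$, with the exact value determined by whether some minimum-weight codeword has zero coordinate sum. Your caveat is therefore not a side remark but an actual correction to the theorem as stated in this paper; fortunately the paper never invokes the parity clause, so the slip is harmless for the downstream results.
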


\begin{thm}[See Thm.4.4.19 \cite{HW}]\label{thm_2.5}
Let $n$ be a positive integer and $q$ be a prime power. Let $g(x)$ be an irreducible factor of $x^n-1$ over $\bF_q$. Suppose $g(x)$ has degree $s$, and let $\gamma\in\bF_{q^s}$ be the root of $g(x)$. Let $\Tr_s:\bF_{q^s}\ra\bF_q$ be the trace map from $\bF_{q^s}$ to $\bF_q$. Then \[\cC_{\gamma}=\{\sum_{i=0}^{n-1}\Tr_s(a\gamma^i)x^i|a\in\bF_{q^s}\}\]is the $[n,s]$ irreducible cyclic code with nonzeros $\{\gamma^{-q^i}|0\leq i<s\}$.
\end{thm}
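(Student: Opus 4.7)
The plan is to realize $\cC_\gamma$ as the image of the $\bF_q$-linear map $\phi:\bF_{q^s}\to\bF_q^n$ sending $a\mapsto(\Tr_s(a\gamma^i))_{i=0}^{n-1}$, and then to read off the structural data (cyclicity, dimension, nonzeros) directly from this description. Linearity of $\phi$ is immediate from linearity of $\Tr_s$. For cyclicity, I would use that $\gamma^n=1$ (since $g(x)\mid x^n-1$) so that the cyclic shift of $\phi(a)$ is $(\Tr_s(a\gamma),\Tr_s(a\gamma^2),\ldots,\Tr_s(a\gamma^{n-1}),\Tr_s(a\gamma^n))=\phi(\gamma a)\in\cC_\gamma$. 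This shows $\cC_\gamma$ is a cyclic $\bF_q$-linear code.

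For the dimension, I would prove $\phi$ is injective: if $\phi(a)=0$, then in particular $\Tr_s(a\gamma^i)=0$ for $i=0,1,\ldots,s-1$, and since $g(x)$ is the minimal polynomial of $\gamma$ of degree $s$, the set $\{1,\gamma,\ldots,\gamma^{s-1}\}$ is an $\bF_q$-basis of $\bF_{q^s}$; nondegeneracy of the trace bilinear form then forces $a=0$. Hence $\dim_{\bF_q}\cC_\gamma=\dim_{\bF_q}\bF_{q^s}=s$.

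To locate the nonzeros, I would evaluate a generic codeword at an arbitrary $n$-th root of unity $\beta$. Expanding $\Tr_s(y)=\sum_{k=0}^{s-1}y^{q^k}$ and interchanging the order of summation gives
\[
c(\beta)=\sum_{i=0}^{n-1}\Tr_s(a\gamma^i)\beta^i=\sum_{k=0}^{s-1}a^{q^k}\sum_{i=0}^{n-1}(\gamma^{q^k}\beta)^i.
\]
Because $(\gamma^{q^k}\beta)^n=1$, the standard geometric-sum identity shows the inner sum equals $n$ when $\beta=\gamma^{-q^k}$ and $0$ otherwise. Thus $c(\beta)=0$ for every codeword as soon as $\beta\notin\{\gamma^{-q^k}:0\le k<s\}$, while for $\beta=\gamma^{-q^{k_0}}$ we obtain $c(\beta)=n\,a^{q^{k_0}}$, which is nonzero for a suitable choice of $a$. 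The nonzeros of $\cC_\gamma$ are therefore exactly the Galois conjugates $\{\gamma^{-q^k}:0\le k<s\}$, which form a single $q$-cyclotomic coset; the parity-check polynomial is consequently the minimal polynomial of $\gamma^{-1}$, irreducible of degree $s$, confirming that $\cC_\gamma$ is the irreducible cyclic $[n,s]$ code described.

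The main obstacle is the sum manipulation in the evaluation step: one must carefully swap the two finite sums and verify that the inner geometric sum genuinely vanishes in $\bF_q$, which implicitly uses the standard hypothesis $\gcd(n,q)=1$ so that the factor $n$ is a unit in $\bF_q$ (this is tacitly required throughout the cyclic-code framework for $n$-th roots of unity to be distinct in the splitting field). Every other step reduces to facts already quoted in the preliminaries, namely nondegeneracy of the trace form and the correspondence between cyclic codes and their zero/nonzero sets among the $n$-th roots of unity.
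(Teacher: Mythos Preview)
The paper does not supply its own proof of this statement: Theorem~\ref{thm_2.5} is quoted as a background result from Huffman--Pless (Theorem~4.4.19 there), so there is no in-paper argument to compare against. Your proposal is the standard proof and is correct. The three ingredients---injectivity of $a\mapsto(\Tr_s(a\gamma^i))_i$ via nondegeneracy of the trace form on the basis $\{1,\gamma,\dots,\gamma^{s-1}\}$, cyclicity from $\gamma^n=1$, and the evaluation $c(\beta)=\sum_{k}a^{q^k}\sum_i(\gamma^{q^k}\beta)^i$ collapsing by the geometric-sum identity---are exactly how the result is established in the reference. Your remark that $\gcd(n,q)=1$ is needed so that the surviving factor $n$ is a unit (and so that the $n$-th roots of unity are simple) is the right caveat; the paper assumes this throughout its cyclic-code preliminaries.
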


\subsection{The generalized Reed-Muller codes}
Let $q$ be a prime power and $l,m$ be positive integers with $1\leq l\leq (q-1)m$. An $l$-th order punctured generalized Reed-Muller code $\cR_q(l,m)^*$ over $\bF_q$ is the cyclic code of length $n=q^m-1$ with generator polynomial \[
g(x)=\sum_{\overset{1\leq i\leq n-1}{w_q(i)<(q-1)m-l}}(x-\xi^i),\] where $\xi$ is the primitive element of $\bF_{q^m}$, $i=\sum_{j=0}^{m-1}i_jq^j$ with $0\leq i_j\leq q-1$ and $w_q(i)=\sum_{i=0}^{m-1}i_j$.

Assmus and Key \cite{As} provided the parameters of the punctured generalized Reed-Muller code $\cR_q(k,m)^*$ in the following theorem.
\begin{thm}[See Thm.5.24 and Cor.5.26\cite{As}]\label{thm_2.6}
The code $\cR_q(l,m)^*$ has dimension \[k=\sum_{i=0}^l\sum_{j=0}^m(-1)^j\binom{m}{j}\binom{i-jq+m-1}{i-jq}\]and minimum weight $d=(q-l_0)q^{m-l_1-1}-1$, where $l=l_1(q-1)+l_0$ and $0\leq l_0\leq q-1$.
\end{thm}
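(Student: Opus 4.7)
The plan is to split the two claims: the dimension is a digit-counting problem, and the minimum distance reduces to the classical minimum-distance theorem for affine generalized Reed--Muller codes.

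\textbf{Dimension.} By Theorem~\ref{thm_2.3}, $\dim \cR_q(l,m)^*$ equals $n$ minus the size of the defining set $T=\{1\le i\le n-1: w_q(i)<(q-1)m-l\}$, so it suffices to enumerate its complement in $\{0,\dots,n-1\}$. The involution $i \mapsto n-i$ on $\{1,\dots,n-1\}$ sends each $q$-adic digit $i_j$ to $q-1-i_j$, and therefore sends weight $s$ to weight $(q-1)m-s$. Adjoining the point $i=0$ (weight $0$), we obtain $\dim \cR_q(l,m)^* = |\{0 \le i \le n-1 : w_q(i) \le l\}|$. The number of $m$-tuples $(i_0,\dots,i_{m-1}) \in \{0,\dots,q-1\}^m$ summing to $s$ equals, by inclusion-exclusion on the subset of coordinates that are at least $q$, the expression $\sum_{j=0}^m (-1)^j \binom{m}{j}\binom{s-jq+m-1}{s-jq}$; summing over $0 \le s \le l$ yields the stated formula.

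\textbf{Minimum distance.} I would pass to an evaluation model: after fixing an $\bF_q$-basis of $\bF_{q^m}$ and identifying $\bF_{q^m}^* = \{\xi^0, \dots, \xi^{n-1}\}$ with $\bF_q^m \setminus \{0\}$, the Mattson--Solomon transform shows that the cyclic condition ``$c(\xi^s) = 0$ for all $s \in T$'' characterises precisely the tuples of values at $\xi^0,\dots,\xi^{n-1}$ of an $\bF_q$-polynomial in $m$ variables of total reduced degree at most $l$. Thus $\cR_q(l,m)^*$ is the puncturing-at-the-origin of the affine generalized Reed--Muller code $\cR_q(l,m)$ of length $q^m$. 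Granting the classical theorem that $\cR_q(l,m)$ has minimum weight $(q-l_0)q^{m-l_1-1}$, attained by codewords of the form $\prod_{i=1}^{l_1}\bigl(1 - (x_i - a_i)^{q-1}\bigr)\prod_{j=1}^{l_0}(x_{l_1+1}-b_j)$, every punctured codeword satisfies $d \ge (q-l_0)q^{m-l_1-1}-1$; selecting the parameters $a_i, b_j$ so that this polynomial takes a nonzero value at the origin realises this lower bound after puncturing, giving the stated $d$.

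\textbf{Main obstacle.} The digit enumeration is mechanical. The substantive ingredient is the minimum-distance lower bound for the affine code $\cR_q(l,m)$, which is the theorem of Kasami--Lin--Peterson; it is typically proved by induction on $m$ via a derivation argument that, after reducing modulo $x_i^q - x_i$, peels off one variable at a time. The second subtlety is showing that puncturing decreases the minimum weight by exactly $1$ rather than $0$, which is exactly what the explicit extremal codeword above provides once its free parameters are tuned to avoid vanishing at the origin; the hypothesis $1 \le l \le (q-1)m$ guarantees that such a choice exists.
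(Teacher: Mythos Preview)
The paper does not prove Theorem~\ref{thm_2.6}; it is quoted from Assmus and Key (Thm.~5.24 and Cor.~5.26 of \cite{As}) and used as a black box, so there is no in-paper argument to compare against. Your outline is in fact the standard route taken in that reference: the dimension is obtained by counting $q$-ary digit tuples of bounded weight via inclusion--exclusion, and the minimum distance is derived from the Kasami--Lin--Peterson bound for the affine code together with an explicit product-form codeword, then punctured at the origin.

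Two small technical remarks. First, in your dimension bookkeeping, be explicit that $n-|T| = 1 + |\{1\le i\le n-1 : w_q(i)\ge (q-1)m-l\}|$ because $0\notin T$, so that after the involution you land exactly on $|\{0\le i\le n-1 : w_q(i)\le l\}|$; you have this right but the ``adjoining $i=0$'' sentence could be read ambiguously. Second, for the puncturing step, the choice $a_i=0$ and $b_j\ne 0$ does make the displayed product nonzero at the origin, but you should also note that when $l_0=0$ the second product is empty and the argument still goes through with the $a_i=0$ choice alone; your parenthetical ``the hypothesis $1\le l\le (q-1)m$ guarantees such a choice exists'' covers this but is a bit terse. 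Neither point is a gap.
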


 The dual of the punctured generalized Reed-Muller code $(\cR_q(k,m)^*)^{\perp}$ and its parameters are obtained in \cite{As} and \cite{As0}.
\begin{thm}[See Cor.5.21 \cite{As}]\label{thm_2.7}
The code $(\cR_q(l,m)^*)^{\perp}$ is the cyclic code with generator polynomial \[g^{\perp}(x)=\prod_{\overset{1\leq j\leq n-1}{\omega_q(j)\leq l}}(x-\alpha^j).\]
\end{thm}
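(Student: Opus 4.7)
The plan is to derive the defining set of $(\cR_q(l,m)^*)^\perp$ from that of $\cR_q(l,m)^*$ by invoking Theorem~\ref{thm_2.1}, and then to reindex so that the product displayed in the statement emerges.

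First, I would read off from the given generator polynomial that the defining set of $\cR_q(l,m)^*$ is $T=\{i:1\leq i\leq n-1,\ w_q(i)<(q-1)m-l\}$; in particular $0\notin T$, so $\xi^0=1$ is a nonzero of $\cR_q(l,m)^*$. By Theorem~\ref{thm_2.1}, the zeros of $(\cR_q(l,m)^*)^\perp$ are precisely the inverses $\xi^{-i}=\xi^{n-i}$ as $i$ ranges over $\{0,1,\ldots,n-1\}\setminus T$. Reindexing by $j\equiv n-i\pmod n$, I would obtain a defining set $T'$ for the dual consisting of $j=0$ together with those $1\leq j\leq n-1$ satisfying $w_q(n-j)\geq (q-1)m-l$.

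The combinatorial heart of the argument is the digit-complement identity $w_q(n-j)+w_q(j)=(q-1)m$, valid for every $1\leq j\leq n-1$. This will follow from the fact that $n=q^m-1$ has base-$q$ expansion $(q-1)(q-1)\cdots(q-1)$ of length $m$, so subtraction from $n$ complements each base-$q$ digit of $j$. Substituting this identity into $w_q(n-j)\geq (q-1)m-l$ converts it into $w_q(j)\leq l$, which is exactly the indexing set appearing in the displayed product.

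The only delicate point will be the bookkeeping around $j=0$: since $0\notin T$, the factor $(x-\alpha^0)=(x-1)$ must also appear as a root of $g^\perp(x)$, so the product in the statement is to be read with $j=0$ tacitly included (via the convention $w_q(0)=0\leq l$). Apart from this minor subtlety, no genuine obstacle remains; as a sanity check, the degree of $g^\perp(x)$ thereby produced agrees with $n$ minus the dimension of $\cR_q(l,m)^*$ given by Theorem~\ref{thm_2.6}, confirming that no zeros have been miscounted.
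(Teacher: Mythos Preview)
The paper does not prove this statement; it is quoted from Assmus--Key with a citation and no argument. Your derivation is the standard one and is correct: read off the defining set of $\cR_q(l,m)^*$, invert the nonzeros via Theorem~\ref{thm_2.1}, and convert the condition using the digit-complement identity $w_q(j)+w_q(n-j)=(q-1)m$. Your observation about $j=0$ is also right: since $\xi^0=1$ is a nonzero of $\cR_q(l,m)^*$, the factor $(x-1)$ belongs in $g^\perp(x)$, and the displayed product is implicitly missing it (or should be read with $0\le j\le n-1$).

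One small slip in your closing sanity check: the degree of $g^\perp(x)$ equals $n-\dim\bigl((\cR_q(l,m)^*)^\perp\bigr)=n-(n-k)=k$, i.e.\ the dimension of $\cR_q(l,m)^*$ itself, not $n$ minus that dimension. This does not affect the proof, only the phrasing of the check.
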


\begin{thm}[See Sect.5.4 \cite{As0}]\label{lem_2.8}
The code $(\cR_q(l,m)^*)^{\perp}$ has length $n=q^m-1$, dimension \[k^{\perp}=n-\sum_{i=0}^l\sum_{j=0}^m(-1)^j\binom{m}{j}\binom{i-jq+m-1}{i-jq},\]and minimum weight \[d^{\perp}\geq (q-l_0’)q^{m-l_1’-1},\]where $m(q-1)-1-l=l_1’(q-1)+l_0’$ and $0\leq l_0’<q-1$.
\end{thm}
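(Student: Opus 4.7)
The plan is to establish the three claims — length, dimension, and minimum-distance lower bound — by analyzing $(\cR_q(l,m)^*)^{\perp}$ via its defining set as furnished by Theorem \ref{thm_2.7}. The length $n = q^m - 1$ is inherited from $\cR_q(l,m)^*$ because duality preserves block length. The dimension follows from the standard identity $\dim \cC + \dim \cC^{\perp} = n$ combined with the dimension formula for $\cR_q(l,m)^*$ stated in Theorem \ref{thm_2.6}, which immediately yields the claimed expression for $k^{\perp}$.

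The minimum-distance bound is the main content, and I would attack it by identifying $(\cR_q(l,m)^*)^{\perp}$ as (an even-like subcode of) another punctured generalized Reed-Muller code. From Theorem \ref{thm_2.7}, the zeros of the dual are $\{\xi^j : 1 \leq j \leq n-1,\ w_q(j) \leq l\}$, together with $\xi^0 = 1$ coming from the element $0$ that is introduced into the defining set by the map $j \mapsto n - j$ acting on the nonzero positions; this step uses the key digit-sum identity $w_q(n - j) = m(q-1) - w_q(j)$ for $1 \leq j \leq n-1$, available because $n = q^m - 1$ has $q$-adic digits all equal to $q-1$. Comparing with the defining-set description of $\cR_q(l',m)^*$ where $l' = m(q-1) - 1 - l$, we see that $(\cR_q(l,m)^*)^{\perp}$ is precisely the subcode of $\cR_q(l',m)^*$ consisting of codewords with zero coordinate sum. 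Theorem \ref{thm_2.6} applied to $\cR_q(l',m)^*$, writing $l' = l_1'(q-1) + l_0'$, then yields a minimum distance of $(q - l_0')q^{m - l_1' - 1} - 1$ for $\cR_q(l',m)^*$, one short of the target bound.

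The remaining step — improving the bound by $1$ — is where the real work lies. My plan is to use that the extended code $\overline{\cR_q(l',m)^*}$ coincides with the classical extended generalized Reed-Muller code $\cR_q(l',m)$, whose minimum weight is exactly $(q - l_0')q^{m - l_1' - 1}$; this strict increase compared to the punctured code forces the minimum-weight codewords of $\cR_q(l',m)^*$ to have nonzero coordinate sum, placing them outside the even-like subcode and yielding $d^{\perp} \geq (q - l_0')q^{m - l_1' - 1}$. The main obstacle is precisely this $+1$ upgrade: it hinges on two pieces of classical GRM theory — the identification of the extended punctured GRM with the standard GRM, and the exact minimum weight of the latter — neither of which appears as a black-box lemma in the excerpt. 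An alternative route avoiding these would be a direct Hartmann--Tzeng bound applied to the defining set $\{0\} \cup \{1 \leq j \leq n-1 : w_q(j) \leq l\}$, exploiting its long run of consecutive small-$q$-weight integers starting at $0$; the trade-off is a more intricate case split driven by the decomposition $l' = l_1'(q-1) + l_0'$, but it keeps the argument purely cyclic-code-theoretic.
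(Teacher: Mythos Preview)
The paper does not actually prove this statement: Theorem~\ref{lem_2.8} is quoted from \cite{As0} (Assmus--Key, Section~5.4) as a black-box result, with no argument supplied. So there is no ``paper's own proof'' to compare against.

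That said, your sketch is the standard route and is essentially correct. The length and dimension claims are immediate from duality and Theorem~\ref{thm_2.6}, exactly as you say. For the minimum distance, your identification of $(\cR_q(l,m)^*)^{\perp}$ with the even-like subcode of $\cR_q(l',m)^*$, $l'=m(q-1)-1-l$, via the digit-sum identity $w_q(n-j)=m(q-1)-w_q(j)$ is the right mechanism, and the $+1$ upgrade really does come from the fact that the extension $\overline{\cR_q(l',m)^*}=\cR_q(l',m)$ has minimum weight $(q-l_0')q^{m-l_1'-1}$, forcing minimum-weight words of the punctured code to be odd-like. You are right that this last step leans on classical GRM facts not restated in the paper, but that is unavoidable here since the paper treats the whole theorem as a citation. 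One small remark: your observation that $0$ belongs to the defining set of the dual (because $\xi^0=1$ is a nonzero of $\cR_q(l,m)^*$) is correct and in fact slightly sharper than the statement of Theorem~\ref{thm_2.7} as written in the paper, which omits the $j=0$ factor; this is precisely what makes the dual the \emph{even-like} subcode rather than all of $\cR_q(l',m)^*$.
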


\subsection{Automorphism group of linear code}
Let $\cC$ be an $[n,k,d]$-linear code over $\bF_q$. Let $\textup{PAut}(\cC)$ be the set of coordinate permutations that maps $\cC$ to itself. Note that the set $\textup{PAut}(\cC)$ forms a subgroup of the symmetric group $\textup{Sym}(n)$. For any $a\in\bF_q^n$, we define a linear transformation on $\bF_q^n$ by \[\phi_a(c)=(a_0c_0,\cdots,a_{n-1}c_{n-1})\textup{ for each }c=(c_0,\cdots,c_{n-1})\in\cC.\] The monomial group $\textup{MAut}(\cC)$ is the set of linear transformations on $\bF_q^n$ that preserves the Hamming weight, in which the element has the form of $\sigma\phi_a$ with $\sigma\in\textup{PAut}(\cC)$ and $a\in \bF_q^n$ such that $\phi_a$ leaves $\cC$ invariant. The automorphism group $\textup{Aut}(\cC)$ is the set of maps $\sigma\phi_a\gamma$, where $\sigma\phi_a\in \textup{MAut}(\cC)$, $\gamma$ is a field automorphism of $\bF_q$ that leaves $\cC$ invariant, and $\sigma\phi_a\gamma$ acts on $\cC$ by\[\sigma\phi_a\gamma(c)=(\gamma(a_0c_{\sigma(0)}),\gamma(a_1c_{\sigma(1)}),\cdots,\gamma(a_{n-1}c_{\sigma(n-1)})) \textup{ for each }c\in \cC.\]

The automorphism group $\textup{Aut}(\cC)$ is called $t$-transitive if for any two $t$-tuples $(c_1,\cdots,c_t)$ and $(c_1’,\cdots,c_t’)$ with $c_i,c_i'\in\bF_q$, $1\leq i\leq t$, there exists an element $\sigma\phi_a\gamma\in \textup{Aut}(\cC)$ such that $(c_1^{\sigma},\cdots,c_t^{\sigma})=(c_1',\cdots,c_t')$.

The following lemma is the sufficient condition for a linear code $\cC$ to hold $t$-design. 
\begin{thm}[See Thm.8.4.7 \cite{HW}]\label{lem_2.9}
Let $\cC$ be a code if length $n$ over $\bF_q$. If $\textup{Aut}(\cC)$ is $t$-transtive, then the codewords of any weight $i\geq t$ of $\cC$ hold a $t$-design.
\end{thm}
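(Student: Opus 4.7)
The plan is to reduce the $t$-design property to a standard orbit argument: once $\textup{Aut}(\cC)$ is shown to permute $\cB_i$ in a way that is transitive on $t$-subsets of the point set, every $t$-subset lies in the same number of blocks, so the definition of a $t$-design is satisfied directly.

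First I would establish that an automorphism $\sigma\phi_a\gamma\in\textup{Aut}(\cC)$ acts on the support of any codeword only through its permutation part $\sigma$. For $c=(c_0,\ldots,c_{n-1})\in\cC$, write $c'=\sigma\phi_a\gamma(c)$, so $c'_j=\gamma(a_j c_{\sigma(j)})$. Since $\phi_a\in\textup{MAut}(\cC)$ must preserve Hamming weight, each $a_j$ lies in $\bF_q^*$; since $\gamma$ is a field automorphism, it fixes $0$. Consequently $c'_j\neq 0\iff c_{\sigma(j)}\neq 0$, which yields $\textup{Suppt}(c')=\sigma^{-1}(\textup{Suppt}(c))$. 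Thus the automorphism permutes the set of weight-$i$ codewords and induces on $\cB_i$ the set-theoretic permutation $B\mapsto\sigma^{-1}(B)$, which depends only on $\sigma$.

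Next, for any $t$-subset $T\subseteq\cP$, set $\lambda(T):=|\{B\in\cB_i:T\subseteq B\}|$. Given two $t$-subsets $S,S'$ of $\cP$, the $t$-transitivity hypothesis---interpreted, as is standard, as saying that the image of $\textup{Aut}(\cC)$ in $\textup{Sym}(n)$ acts $t$-transitively on the coordinate set $\cP$---produces some $\sigma\phi_a\gamma\in\textup{Aut}(\cC)$ with $\sigma^{-1}(S)=S'$ as sets. Since $B\mapsto\sigma^{-1}(B)$ is a bijection on $\cB_i$ and $S\subseteq B\iff S'\subseteq\sigma^{-1}(B)$, it restricts to a bijection between the blocks containing $S$ and those containing $S'$. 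Hence $\lambda(S)=\lambda(S')$, and denoting this common value by $\lambda$, the pair $(\cP,\cB_i)$ is a $t$-$(n,i,\lambda)$ design, where the hypothesis $i\ge t$ ensures that the containment $S\subseteq B$ is not vacuous.

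The main obstacle is making precise the paper's slightly garbled definition of $t$-transitivity (as written it speaks of tuples in $\bF_q^t$ rather than tuples of distinct coordinates), and then carefully verifying that the support of a codeword is invariant under both the field automorphism $\gamma$ and the diagonal multiplication $\phi_a$---only after this reduction is it clear that the coordinate permutation $\sigma$ alone governs the action on $\cB_i$. Once both reductions are in place, the bijection argument is entirely routine.
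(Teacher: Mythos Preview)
Your argument is correct and is exactly the standard proof of this fact. Note, however, that the paper does not supply its own proof of this statement: it is quoted verbatim from Huffman--Pless (Theorem~8.4.7) and used as a black box, so there is nothing in the paper to compare against beyond the citation. Your reduction---that the diagonal part $\phi_a$ has all entries $a_j\in\bF_q^*$ and the field automorphism $\gamma$ fixes $0$, so the support of $\sigma\phi_a\gamma(c)$ is $\sigma^{-1}(\textup{Suppt}(c))$, after which $t$-transitivity on coordinates forces $\lambda(S)$ to be constant---is precisely the argument given in the cited reference. Your remark that the paper's phrasing of $t$-transitivity (tuples of elements of $\bF_q$ rather than tuples of distinct coordinate positions) is imprecise is also well taken; the intended meaning is the standard one you use.
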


Let $\cC$ be a $[n,k,d]$-linear code indexed by the elements of $\bF_q$ with $n=q$. The general affine group $\textup{GA}_1(\bF_q)$ is the set of permutations of $\bF_q$:\[\{\sigma_{s_1,s_2}: s_1\in \bF_q^*,s_2\in\bF_q\},\] where $\sigma_{s_1,s_2}(x)=s_1x+s_2$ for any $x\in\bF_q$. The linear code $\cC$ is called affine invariant if the general affine group $\textup{GA}_1(\bF_q)$ leaves $\cC$ invariant. Actually, an affine-invariant code is an extended cyclic code such that $\textup{GA}_1(\bF_q)\subseteq\textup{PAut}(\cC)$. The following theorem demonstrate that affine-invariant is a very useful property to determine which extended cyclic code holds $2$-design.

\begin{thm}[See Thm.6.6 \cite{DC1}]\label{thm_2.10}
Let $A_i$ be the number of codeword of weight $i$ for $0\leq i\leq n$. If the linear code $\cC$ is affine invariant, then for each $i$ with $A_i\neq 0$, the supports of the codewords of weight $i$ in $\cC$ form a $2$-design.
\end{thm}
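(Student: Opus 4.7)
The plan is to derive this statement directly from Theorem \ref{lem_2.9} (the $t$-transitivity criterion) by checking that affine-invariance of $\cC$ forces $\textup{Aut}(\cC)$ to act $2$-transitively on the coordinate set $\bF_q$, and then invoking that theorem with $t=2$.

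First I would verify the elementary fact that the general affine group $\textup{GA}_1(\bF_q) = \{\sigma_{s_1,s_2}: s_1\in \bF_q^*,s_2\in\bF_q\}$ is itself a $2$-transitive permutation group on $\bF_q$. Given any two ordered pairs of distinct elements $(x_1, x_2)$ and $(y_1, y_2)$ of $\bF_q$, the linear system $s_1 x_i + s_2 = y_i$ for $i = 1,2$ has the unique solution $s_1 = (y_1 - y_2)(x_1 - x_2)^{-1}$ and $s_2 = y_1 - s_1 x_1$. Since $y_1 \neq y_2$, we get $s_1 \in \bF_q^*$, so $\sigma_{s_1,s_2}\in \textup{GA}_1(\bF_q)$ and it sends $(x_1, x_2)$ to $(y_1, y_2)$. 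This is the only real computational step.

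Next, by the definition of affine-invariance given in the excerpt, the hypothesis on $\cC$ gives the chain of inclusions $\textup{GA}_1(\bF_q) \subseteq \textup{PAut}(\cC) \subseteq \textup{Aut}(\cC)$. Combined with the previous paragraph, this shows that $\textup{Aut}(\cC)$ contains a subgroup acting $2$-transitively on the coordinate set, hence $\textup{Aut}(\cC)$ is itself $2$-transitive in the sense of the definition preceding Theorem \ref{lem_2.9}.

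Finally I would apply Theorem \ref{lem_2.9} with $t = 2$: for every weight $i \geq 2$ with $A_i \neq 0$, the supports of the weight-$i$ codewords of $\cC$ form the blocks of a $2$-design on point set $\bF_q$. The cases $i \in \{0, 1\}$ are degenerate (empty or singleton supports) and are excluded by the standing convention $v > k > \lambda$. The argument is essentially a bookkeeping reduction to Theorem \ref{lem_2.9}; the only substantive point is the $2$-transitivity of $\textup{GA}_1(\bF_q)$, which is immediate from the explicit formulas above, so no serious obstacle arises.
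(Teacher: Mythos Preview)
Your argument is correct. Note, however, that the paper does not supply its own proof of this statement: Theorem~\ref{thm_2.10} is quoted verbatim from \cite{DC1} and used as a black box. Your reduction to Theorem~\ref{lem_2.9} via the $2$-transitivity of $\textup{GA}_1(\bF_q)$ is exactly the standard proof underlying the cited result, so there is nothing to compare---you have reconstructed the expected argument.
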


\section{Codes of designs held in a class of affine-invariant ternary code}\label{sec_3}
Let $m\geq 2$ be an positive integer, $p$ be an odd prime and $q=p^{2m}$. Let $\Tr_{s}$ be the trace map from $\bF_{p^s}$ to $\bF_p$. Let $n=q-1$. We consider the linear code  
\begin{equation}\label{Eqn_TC}
\cC(2m,p)=\{c(a,b,h)\ |\  a\in \bF_{p^m},b\in \bF_{p^{2m}},h\in \bF_p\},
\end{equation}
where 
\[
c(a,b,h)=(\Tr_{2m}(at^{p^m+1}+bt)+h)_{t\in \bF_{p^{2m}}}.
\]

As noted in \cite{DX2}, the code $\cC(2m,p)$ is affine invariant, thus it holds $2$-designs. 
For each codeword $c(a,b,h)$ in $\cC(2m,p)$, the Hamming weight $w_H(c(a,b,h))=p^{2m}-T(a,b,h)$, where 
\begin{equation}\label{Eqn_T}
T(a,b,h)=|\{t\in\bF_q\ |\ \Tr_{2m}(at^{p^m+1}+bt)+h=0\}|.
\end{equation}

\begin{lemma}[\cite{DX2}]
Let $T(a,b,h)$ be defined in $(\ref{Eqn_T})$ for $a\in \bF_{p^m}$, $b\in \bF_q$ and $h\in\bF_p$. Then 
\begin{itemize}
\item[(1)] If $a=b=h=0$, then $T(a,b,h)=p^{2m}$.\\
\item[(2)] If $a=b=0$ and $h\ne 0$, then $T(a,b,h)=0$.\\
\item[(3)] If $a=0$ and $b\ne 0$, then $T(a,b,h)=p^{2m-1}$.\\
\item[(4)] If $a\ne 0$, then \[T(a,b,h)=\begin{cases}p^{2m-1}-p^{m-1}(p-1)\textup{ if }h=\textup{Tr}(as^{p^m+1}_{at,bt})\\p^{2m-1}+p^{m-1}\quad\quad\quad\textup{ if }h\ne \Tr(as^{p^m+1}_{at,bt})\end{cases},\]where $t\in \bF_p^*$ and $s_{at,bt}^{p^m+1}$ is a solution of $((at)^{p^m}+at)s=2ats=-(bt)^{p^m}$, i.e. $s_{at,bt}^{p^m+1}=-2^{-1}a^{-1}b^{p^m}t^{p^m-1}=a^{-1}b^{p^m}$.
\end{itemize}
\end{lemma}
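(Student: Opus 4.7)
The plan is to handle (1)--(3) by direct counting and (4) by an additive-character Weil-sum argument built around the Hermitian polynomial $t^{p^m+1}$. Parts (1) and (2) are immediate case checks. For part (3), the map $t\mapsto\Tr_{2m}(bt)$ is a non-trivial $\bF_p$-linear functional on $\bF_{p^{2m}}$ when $b\ne 0$, so every value in $\bF_p$ is attained on a fibre of size $p^{2m-1}$, giving $T(0,b,h)=p^{2m-1}$.

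For part (4), let $\chi_p$ be the canonical additive character of $\bF_p$ and $\chi=\chi_p\circ\Tr_{2m}$ that of $\bF_q$. Orthogonality yields
\[
T(a,b,h)=p^{2m-1}+\tfrac{1}{p}\sum_{y\in\bF_p^{*}}\chi_p(yh)\,S(y),\qquad S(y)=\sum_{t\in\bF_q}\chi(yat^{p^m+1}+ybt).
\]
The key step is to complete the square in $S(y)$. Shifting $t\mapsto t-s^{*}$ with $s^{*}=b^{p^m}/(2a)$ (independent of $y$, chosen so that the linear term in $t$ vanishes after using $\Tr_{2m}(yast^{p^m})=\Tr_{2m}(yas^{p^m}t)$, which relies on $ya\in\bF_{p^m}$ and hence $(ya)^{p^m}=ya$) gives $S(y)=\chi_p(yc_{a,b})\sum_{t}\chi(yat^{p^m+1})$, with the surviving constant $c_{a,b}=-\tfrac{1}{2}\Tr_m(a^{-1}b^{p^m+1})=-\Tr_{2m}(b^{p^m+1}/(4a))$ after folding $\Tr_{2m}|_{\bF_{p^m}}=2\Tr_m$.

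The remaining sum is a Hermitian-norm sum. Since $t\mapsto t^{p^m+1}$ is the norm map $\bF_{p^{2m}}^{*}\twoheadrightarrow\bF_{p^m}^{*}$ with every fibre of size $p^m+1$, for $ya\in\bF_{p^m}^{*}$ one has
\[
\sum_{t\in\bF_q}\chi(yat^{p^m+1})=1+(p^m+1)\sum_{u\in\bF_{p^m}^{*}}\chi_p(2\Tr_m(yau))=1-(p^m+1)=-p^m,
\]
using the standard vanishing of a non-trivial additive character sum over $\bF_{p^m}^{*}$. Substituting back gives
\[
T(a,b,h)=p^{2m-1}-p^{m-1}\sum_{y\in\bF_p^{*}}\chi_p\!\big(y(h+c_{a,b})\big),
\]
and the inner character sum equals $p-1$ when $h=-c_{a,b}$ and $-1$ otherwise, producing the two values $p^{2m-1}-p^{m-1}(p-1)$ and $p^{2m-1}+p^{m-1}$.

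The main obstacle is the bookkeeping in completing the square: one must cancel the linear term using the $\bF_{p^m}$-fixedness of $ya$, track the $2^{-1}$ and $4^{-1}$ factors generated by the cross terms in $(t-s)^{p^m+1}$, and finally reconcile the constant $-c_{a,b}=\Tr_{2m}(b^{p^m+1}/(4a))$ with the distinguished value $h=\Tr(as_{at,bt}^{p^m+1})$ in the statement. A direct calculation from the defining equation $2at\cdot s_{at,bt}=-(bt)^{p^m}$ shows $as_{at,bt}^{p^m+1}=b^{p^m+1}/(4a)$, independent of $t\in\bF_p^{*}$, confirming the match.
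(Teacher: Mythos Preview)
The paper does not prove this lemma; it is quoted verbatim from \cite{DX2} and used without argument. So there is no ``paper's own proof'' to compare against.

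Your character-sum argument is correct and is the standard way to evaluate such Weil sums. The orthogonality expansion, the completion of the square using $(ya)^{p^m}=ya$ (valid because $a\in\bF_{p^m}$ and $y\in\bF_p$), and the evaluation of the Hermitian-norm sum via the $(p^m+1)$-to-$1$ surjectivity of $t\mapsto t^{p^m+1}$ onto $\bF_{p^m}^*$ are all sound. Your bookkeeping checks out: with $s^*=b^{p^m}/(2a)$ the linear term cancels, the constant becomes $-yb^{p^m+1}/(4a)$, and the norm sum equals $-p^m$, giving
\[
T(a,b,h)=p^{2m-1}-p^{m-1}\sum_{y\in\bF_p^*}\chi_p\!\big(y(h-\Tr_{2m}(b^{p^m+1}/(4a)))\big),
\]
which yields the two claimed values. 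Your reconciliation with the (somewhat garbled) statement is also right: from $2at\,s_{at,bt}=-(bt)^{p^m}$ and $t\in\bF_p^*$ one gets $s_{at,bt}=-b^{p^m}/(2a)$, hence $as_{at,bt}^{p^m+1}=b^{p^m+1}/(4a)$, so the distinguished value $h=\Tr_{2m}(as_{at,bt}^{p^m+1})$ matches $-c_{a,b}$ exactly.

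One cosmetic remark: in the displayed norm-sum line you should perhaps make explicit that $p$ is odd (so $2\ne 0$ and the character $x\mapsto\chi_p(2\Tr_m(x))$ is genuinely nontrivial on $\bF_{p^m}$), but this is assumed throughout the paper.
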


Then each codeword $c(a,b,h)$ has minimum weight $d=p^{2m-1}(p-1)-p^{m-1}$ only if $a\in\bF_{p^m}$, $b\in \bF_{p^{2m}}$ and $h\neq \Tr(b)$. The linear code $\cC(2m,p)$ has parameters $[p^{2m},3m+1,p^{2m-1}(p-1)-p^{m-1}]$ from Theorem 3 in \cite{DX2}. Let $\bD_d(\cC(2m,p))$ be a design $(\cP,\cB)$, in which the blocks formed by the supports of codewords in $\cC(2m,p)$ with minimum weight $d$. We know that $\bD_d(\cC(2m,p))$ is a $2$-design from \cite{DX2}. Let $M_{\bD_d}$ be the incidence matrix of $\bD_d(\cC(2m,p))$ and $\cC_p(\bD_d(\cC(2m,p)))$ be a linear code spanned by the row vectors of $M_{\bD_d}$ over $\bF_p$. We will restrict ourselves to the case of $p=3$ and try to compute the dimension and minimum weight of $\cC_3(\bD_d(\cC(2m,3)))$.

\begin{table}[htbp]
       \centering
       \caption{The weight distribution of $\cC(2m,p)$}
       \label{table}
       \begin{tabular}{cccc}
             \hline
             &\\[-6pt]
             Weight&Multiplicity\\
             \hline
             &\\[-6pt]
             $0$&$1$\\
             &\\[-6pt]
             $p^{2m-1}(p-1)-p^{m-1}$&$p^{2m}(p^m-1)(p-1)$\\
             &\\[-6pt]
             $p^{2m-1}(p-1)$&$p(p^{2m}-1)$\\
             &\\[-6pt]
             $(p^{2m-1}+p^{m-1})(p-1)$&$p^{2m}(p^m-1)$\\
             &\\[-6pt]
             $p^{2m}$&$p-1$\\          
            \hline
       \end{tabular}
\end{table}

\begin{remark}
The ternary linear code defined by quadratic function $\Tr_{2m}(at^2+bt)+h$ over $\bF_{3^{2m}}$ in \cite{DC2} has parameters $[n,k,d]=[3^{2m},4m+1,2(3^{2m-1}-3^{m-1})]$. The ternary linear code(we used in this paper) defined by Hermitian function $\Tr_{2m}(at^{3^m+1}+bt)+h$ over $\bF_{3^{2m}}$ in $(\ref{Eqn_TC})$ has parameters $[n,k,d]=[3^{2m},3m+1,2\cdot 3^{2m-1}-3^{m-1}]$. Hence they are not equivalent to each other.
\end{remark}

To simplify notations, we write $f(t)$ as $(f(t))_{t\in \bF_{3^{2m}}}$ below.

\begin{thm}\label{thm_3.15}
The linear code $\cC_3(\bD_d(\cC(2m,3)))$ is spanned by 
\[\left\{\begin{matrix}\sum_{i=0}^{m-1}\Tr_{2m}(b_it^{(3^m+1)3^i+1})+\sum_{i=0}^{2m-1}\Tr_{2m}(b_i't^{3^i+1})\\+\sum_{i=0}^{m-1}\Tr_{m}(c_it^{(3^m+1)(3^i+1)})
+\Tr_{2m}(bt)+h:b,b_i,b_i'\in\bF_{3^{2m}},c_i\in\bF_{3^m},h\in\bF_3\end{matrix}\right\},\]
and it holds $2$-designs.
\end{thm}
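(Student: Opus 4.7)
My approach exploits the characteristic-$3$ identity $x^2 = \mathbf{1}_{\{x\ne 0\}}$ for $x\in\bF_3$: the row of the incidence matrix $M_{\bD_d}$ indexed by a minimum-weight codeword $c = c_{a,b,h}$ is exactly the entrywise square $c_{a,b,h}^2$. By the weight computation preceding the theorem, the minimum-weight codewords are precisely those with $a\in\bF_{3^m}^*$, $b\in\bF_{3^{2m}}$, and $h\in\bF_3\setminus\{\Tr_{2m}(b)\}$, so $\cC_3(\bD_d(\cC(2m,3)))$ is the $\bF_3$-linear span of $\{c_{a,b,h}^2\}$. The core of the argument is to expand
\[
c_{a,b,h}^2 = \Tr_{2m}(at^{3^m+1})^2 + \Tr_{2m}(bt)^2 + h^2 + 2\Tr_{2m}(at^{3^m+1})\Tr_{2m}(bt) + 2h\Tr_{2m}(at^{3^m+1}) + 2h\Tr_{2m}(bt),
\]
and match each piece to one of the four generator families in the claim via a $3$-cyclotomic-coset analysis.

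The key computations use $\Tr_{2m}(y) = -\Tr_m(y)$ for $y\in\bF_{3^m}$ together with $t^{3^m+1}\in\bF_{3^m}$. Grouping terms by Frobenius exponent yields $\Tr_{2m}(at^{3^m+1})^2 = \sum_{d=0}^{m-1}\Tr_m(a^{1+3^d}t^{(3^m+1)(1+3^d)})$, matching the $c_i$-family, and $\Tr_{2m}(bt)^2 = \sum_{d=0}^{2m-1}\Tr_{2m}(b^{1+3^d}t^{1+3^d})$, matching the $b_i'$-family. For the cross term, the change of summation index $k = i - j \pmod{2m}$ in $\sum_{i,j=0}^{2m-1}a^{3^i}b^{3^j}t^{(3^m+1)3^i+3^j}$ collapses it to $\sum_{k=0}^{2m-1}\Tr_{2m}(a^{3^k}b\,t^{(3^m+1)3^k+1})$, which by the period-$m$ identity $(3^m+1)3^m\equiv 3^m+1\pmod{3^{2m}-1}$ becomes $2\sum_{k=0}^{m-1}\Tr_{2m}(\cdot)$, matching the $b_i$-family. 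Finally $2h\Tr_{2m}(at^{3^m+1}) = \Tr_{2m}(2ha\cdot t^{3^m+1})$ is the $i=m$ slice of the $b_i'$-family, $2h\Tr_{2m}(bt)$ is absorbed into $\Tr_{2m}(bt)$, and $h^2$ into the constant $h$. This establishes the inclusion $\subseteq$.

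For the reverse inclusion, I would leverage that for each fixed $(a,b)$ with $a\ne 0$ there are two admissible values $h_1,h_2$ of $h$, and the $\bF_3$-combinations $c_{a,b,h_1}^2 \pm c_{a,b,h_2}^2$ isolate the codeword $c_{a,b}$ (from the difference) and $c_{a,b}^2$ modulo a scalar constant (from the sum). Varying $a\in\bF_{3^m}^*$ and $b\in\bF_{3^{2m}}$ and polarizing the resulting quadratic and bilinear expressions, together with Moore-determinant nonsingularity of the Frobenius matrix $(e_s^{3^k})_{k,s}$ for a $\bF_3$-basis $\{e_s\}$ of $\bF_{3^m}$, realizes every element of each of the four generator families as an $\bF_3$-linear combination of squared minimum-weight codewords. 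The main technical obstacle I anticipate is to show that the constant vector $\mathbf{1}$ lies in the span, since the pointwise identity $\sum_{h\in\bF_3}c_{a,b,h}^2 = 2\mathbf{1}$ couples $\mathbf{1}$ to the "forbidden" square $c_{a,b,\Tr_{2m}(b)}^2$; this is resolved by an explicit combination exploiting the affine structure. Finally, the $2$-design claim is immediate from Theorem~\ref{thm_2.10}: since $\cC(2m,3)$ is affine-invariant, the group $\textup{GA}_1(\bF_{3^{2m}})$ permutes the blocks of $\bD_d(\cC(2m,3))$ and hence the rows of $M_{\bD_d}$, so $\cC_3(\bD_d(\cC(2m,3)))$ is itself affine-invariant.
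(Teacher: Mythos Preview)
Your overall strategy coincides with the paper's: recognize that the rows of $M_{\bD_d}$ are the entrywise squares $c_{a,b,h}^2$, expand into six summands, and then identify each summand with one of the four trace families via a Frobenius/cyclotomic reindexing (your Moore-determinant remark is exactly the paper's Lemma~\ref{lem_3.11}, used in Lemmas~\ref{lem_3.12}--\ref{lem_3.14}). Your inclusion $\subseteq$ is correct and essentially identical to the paper's computations. Your affine-invariance argument for the $2$-design claim is in fact cleaner than the paper's: the paper verifies directly that each generator family is closed under $t\mapsto s_1t+s_2$, whereas your observation that $\textup{GA}_1(\bF_{3^{2m}})$ already permutes the blocks of $\bD_d$ (because it preserves $\cC(2m,3)$) gives affine invariance of the span for free.

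The genuine gap is in your reverse inclusion, and it is precisely the point you flag but do not resolve: obtaining the constant vector $\mathbf{1}$. Your difference $c_{a,b,h_1}^2-c_{a,b,h_2}^2$ yields $\Tr_{2m}(at^{3^m+1}+bt)$ only up to a constant that depends on $\Tr_{2m}(b)$, and the subsequent polarization steps keep producing generators ``modulo $\mathbf{1}$''; the phrase ``resolved by an explicit combination exploiting the affine structure'' is not a proof. The paper handles this head-on in Lemma~\ref{lem_3.4}: it first proves the vanishing identities $\sum_{a\in\bF_{3^m}^*}\Tr_{2m}(at^{3^m+1})^2=\sum_{b\in\bF_{3^{2m}}^*}\Tr_{2m}(bt)^2=\sum_{a\in\bF_{3^m}^*}\Tr_{2m}(at^{3^m+1})=0$ (Lemma~\ref{lem_3.3}), and then sums the expression~(\ref{Eqn_code4}) over all $a\in\bF_{3^m}^*$ and $b\in\bF_{3^{2m}}^*$ with $h=\Tr_{2m}(b)+1$ to collapse everything to the scalar $1$. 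Once $\mathbf{1}$ is in hand, the paper then isolates $\Tr_{2m}(bt)$, $\Tr_{2m}(bt)\Tr_{2m}(b't)$, $\Tr_{2m}(at^{3^m+1})$, and the cross terms in sequence (Lemmas~\ref{lem_3.6}--\ref{lem_3.9}); notably, for $\Tr_{2m}(bt)$ the paper invokes a Weil-bound character-sum estimate (Lemma~\ref{lem_3.5}) to guarantee that certain level sets span $\bF_{3^{2m}}$, a step your sketch does not anticipate. Your polarization route can almost certainly be made to work without that estimate, but as written it does not close the loop.
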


\begin{thm}\label{thm_3.16}
The linear code $\cC_3(\bD_d(\cC(2m,3)))$ has length $n=p^{2m}$, dimension $k=\frac{9m^2+7m}{2}+1$ and the minimum distance lower bounded by $3^{2m-2}$.
\end{thm}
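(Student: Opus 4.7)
The length $n=3^{2m}$ is immediate from the construction. For both the dimension and the minimum weight I would exploit the Theorem~\ref{thm_3.15} presentation: the code is the $\bF_3$-span of evaluations $(\Tr(at^e))_{t\in\bF_{3^{2m}}}$ over the listed exponents $e>0$ (with suitable coefficients $a$) together with the constants. Since every non-constant generator vanishes at $t=0$ and $\sum_{t\in\bF_{3^{2m}}^*}t^e=0$ for $0<e<3^{2m}-1$, the non-constant part forms an extended cyclic code $\overline{\cC'}$ of length $3^{2m}$ with $\dim\overline{\cC'}=\dim\cC'=\sum|C_e|$ by Theorem~\ref{thm_2.5}, where the sum ranges over the distinct $3$-cyclotomic cosets of the exponents appearing. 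The constant generator is linearly independent of $\overline{\cC'}$ because every element of $\overline{\cC'}$ is zero at $t=0$, hence $k=1+\sum|C_e|$.

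The heart of the computation is enumerating the distinct cosets, grouped by the $3$-adic digit sum $w_3$:
\begin{itemize}
\item $w_3=1$: only $C_1$, of size $2m$.
\item $w_3=2$, exponents $3^i+1$, $0\le i\le 2m-1$: the identity $C_{3^i+1}=C_{3^{2m-i}+1}$ (obtained by multiplying $3^i+1$ by $3^{2m-i}$ modulo $3^{2m}-1$) leaves $m+1$ distinct cosets indexed by $i\in\{0,\ldots,m\}$; all have size $2m$ except $C_{3^m+1}$, which has size $m$ since $3^m(3^m+1)\equiv 3^m+1$. Contribution $2m^2+m$.
\item $w_3=3$, exponents $3^{m+i}+3^i+1$, $0\le i\le m-1$: a short case analysis on the position sets (the mixed pattern for $i=0$ and the three-$1$ pattern $\{0,i,m+i\}$ for $i\ge 1$) shows all $m$ cosets are distinct and have full size $2m$. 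Contribution $2m^2$.
\item $w_3=4$, exponents $(3^m+1)(3^i+1)$, $0\le i\le m-1$: the $i=0$ exponent has two $2$'s at positions $\{0,m\}$, giving a coset of size $m$; for $i\ge 1$ the four-$1$ position set $\{0,i,m,m+i\}$ is invariant under shift by $m$, so every such coset has size at most $m$, and the only coincidence among distinct $i$'s is the involution $i\leftrightarrow m-i$. Splitting on the parity of $m$ (with an extra fixed-point coset of size $m/2$ coming from $i=m/2$ when $m$ is even), the total contribution works out to $\tfrac{m(m+1)}{2}$ in both cases.
\end{itemize}
Summing gives $\sum|C_e|=2m+(2m^2+m)+2m^2+\tfrac{m(m+1)}{2}=\tfrac{9m^2+7m}{2}$, hence $k=\tfrac{9m^2+7m}{2}+1$.

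For the minimum weight, every exponent appearing in Theorem~\ref{thm_3.15} has $w_3(e)\le 4$, so every codeword is the evaluation over $\bF_{3^{2m}}$ of a reduced polynomial of degree at most $4$; thus $\cC_3(\bD_d(\cC(2m,3)))\subseteq\overline{\cR_3(4,2m)^*}$. The character-sum identity $\sum_{x\in\bF_{3^{2m}}^*}x^i=-1$ if $i=0$ and $0$ for $0<i<3^{2m}-1$ shows that the extended punctured fourth-order GRM coincides with the unpunctured one, whose minimum weight by the classical formula (with $l=4=2(q-1)+0$, so $l_1=2,\ l_0=0$) is $(3-0)\cdot 3^{2m-2-1}=3^{2m-2}$. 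Subcode containment then yields $d\ge 3^{2m-2}$, as claimed.

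The main obstacle I anticipate is the $w_3=4$ bookkeeping: the $i=0$ exponent has a qualitatively different digit pattern from the $i\ge 1$ ones, the coset sizes depend on the parity of $m$ (through the fixed point $i=m/2$ when $m$ is even), and one must verify that no additional coincidences among the position sets $\{0,i,m,m+i\}$ modulo $2m$ occur beyond the involution $i\leftrightarrow m-i$.
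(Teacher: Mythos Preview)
Your proposal is correct and follows essentially the same approach as the paper. The paper frames the computation via $\cC_3(\bD_d(\cC(2m,3)))=\overline{\cC^{\perp}}^{\perp}$ and counts the elements of the four families $S_0,S_1,S_2,S_3$ of nonzeros (Lemmas~\ref{lem_3.17}--\ref{lem_3.19}), obtaining $|S_0|=2m$, $|S_1|=2m^2$, $|S_2|=(2m+1)m$, $|S_3|=\tfrac{m(m+1)}{2}$; your grouping by digit sum $w_3\in\{1,2,3,4\}$ is exactly the same partition (with $S_0,S_2,S_1,S_3$ corresponding to $w_3=1,2,3,4$ respectively), and your coset-size bookkeeping matches the paper's lemmas line by line. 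Your direct argument that the non-constant part is an extended cyclic code with the constant vector independent is a slightly more explicit version of the paper's one-line identity $\dim(\overline{\cC^{\perp}}^{\perp})=\dim(\cC)+1$, and your justification that $\overline{\cR_3(4,2m)^*}=\cR_3(4,2m)$ via the vanishing power sums is in fact a small clarification over the paper, which simply cites Theorem~\ref{thm_2.6} for the minimum weight of $\cR_3(4,2m)$.
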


\begin{example}
The parameters of the linear code $\cC(2m,3)$ and $\cC_3(\bD_d(\cC(2m,3)))$ for $m=1,2$ are listed as follows:
\begin{displaymath}
\begin{matrix}
m&\cC(2m,3)&\cC_3(\bD_d(\cC(2m,3)))\\
1&[9,4,5]&[9,9,1]\\
2&[81,7,51]&[81,26,21].
\end{matrix}
\end{displaymath}

The linear code $\cC(4,3)$ has weight distribution\[1+1296z^{51}+240z^{54}+648z^{60}+2z^{81}.\]

The linear code $\cC_3(\bD_d(\cC(4,3)))$ has weight distribution
\begin{displaymath}
\begin{array}{llllll}
1&+648z^{21}&+240z^{27}&+38880z^{28}&+25920z^{29}\\
+104976z^{30}&+373248z^{31}&+678780z^{32}&+2491560z^{33}&+9305280z^{34}\\
+12791520z^{35}&+52067880z^{36}&+167585760z^{37}&+193771440z^{38}&+633582000z^{39}\\
+1789957440z^{40}&+1784204820z^{41}&+5114657520z^{42}&+12311494560z^{43}&+10655818920z^{44}\\
+26240268600z^{45}&+54869931360z^{46}&+40818498480z^{47}&+86821798860z^{48}&+155822087880z^{49}\\
+99765111888z^{50}&+181835828208z^{51}&+279785262240z^{52}&+153082363320z^{53}&+238171803600z^{54}\\
+311801503680z^{55}&+144740601000z^{56}&+190453223160z^{57}&+210148421760z^{58}&+81951931440z^{59}\\
+90132625584z^{60}&+82728913248z^{61}&+26672379840z^{62}&+24134094720z^{63}&+18117430380z^{64}\\
+4739847840z^{65}&+3450820320z^{66}&+2053913760z^{67}&+424174320z^{68}&+238097880z^{69}\\
+109483488z^{70}&+16715808z^{71}&+7076700z^{72}&+2442960z^{73}&+116640z^{74}\\
+58320z^{75}&+38880z^{77}&+6480z^{78}&+2106z^{80}&+2186z^{81}.
\end{array}
\end{displaymath}
\end{example}

\section{Proofs of the main results}\label{sec_4}
In this section, we prove Theorem \ref{thm_3.15} and Theorem \ref{thm_3.16}. We firstly state the generators of the linear code $\cC_3(\bD_d(\cC(2m,3)))$ for each integer $m\geq 2$, i.e. the rows of the incidence matrix  $M_{\bD_d}$ of $\bD_d(\cC(2m,3))$. Next, we simplify the form of these generators and give the proof of Theorem \ref{thm_3.15}. The results in Theorem $\ref{thm_3.15}$ imply that the linear code $\cC_3(\bD_d(\cC(2m,3)))$ is a subcode of the extended code of the $4$-th order generalized Reed-Muller code. It induces the lower bound of minimum weight of the code $\cC_3(\bD_d(\cC(2m,3)))$. Finally, we compute the dimension of the code $\cC_3(\bD_d(\cC(2m,3)))=\overline{\cC^{\perp}}^{\perp}$ by counting the number of elements in the defining set of $\cC$ and give the proof of Theorem \ref{thm_3.16}.

We can easily get the following lemma from the definition of $\cC_3(\bD_d(\cC(2m,3)))$ in Section \ref{sec_3}.

\begin{lemma}\label{lem_3.2}
The linear code $\cC_3(\bD_d(\cC(2m,3)))$ defined as above is generated by the vectors in the following set over $\bF_3$: \[\{(\Tr_{2m}(at^{3^m+1}+bt)+h)^2\ |\ a\in \bF_{3^m}^*,b\in \bF_{3^{2m}},h\in \bF_3\setminus\{\Tr_{2m}(b)\}\}.\]
\end{lemma}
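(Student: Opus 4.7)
The plan is to identify the rows of the incidence matrix $M_{\bD_d}$ explicitly by combining two ingredients: the weight formula from the preceding lemma, and the fact that in $\bF_3$ the map $x\mapsto x^2$ is the characteristic function of the set of nonzero elements.

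First I would use the preceding lemma to pin down the minimum-weight codewords of $\cC(2m,3)$. Specializing to $p=3$, the quantity $T(a,b,h)$ takes its minimum value $3^{2m-1}+3^{m-1}$ precisely when $a\in\bF_{3^m}^*$ and $h\ne\Tr_{2m}(as^{3^m+1}_{at,bt})$. Plugging in $s^{3^m+1}_{at,bt}=a^{-1}b^{3^m}$ from part (4) of that lemma, the excluded value is $\Tr_{2m}(a\cdot a^{-1}b^{3^m})=\Tr_{2m}(b^{3^m})=\Tr_{2m}(b)$. Hence the minimum-weight codewords are exactly $c(a,b,h)$ with $a\in\bF_{3^m}^*,\ b\in\bF_{3^{2m}},\ h\in\bF_3\setminus\{\Tr_{2m}(b)\}$, which matches the parameter set in the lemma.

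Next I would translate each such codeword into the corresponding row of $M_{\bD_d}$. By definition the block attached to a minimum-weight codeword $c(a,b,h)$ is its support $\{t\in\bF_{3^{2m}}:\Tr_{2m}(at^{3^m+1}+bt)+h\ne 0\}$, and the row of the incidence matrix indexed by this block is its $\bF_3$-valued indicator vector. The key elementary observation is that for every $x\in\bF_3$ one has $x^2=0$ if $x=0$ and $x^2=1$ otherwise; therefore the indicator vector of the support of $c(a,b,h)$ is precisely $\bigl((\Tr_{2m}(at^{3^m+1}+bt)+h)^2\bigr)_{t\in\bF_{3^{2m}}}$.

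Finally, since $\cC_3(\bD_d(\cC(2m,3)))$ is by definition the $\bF_3$-span of the rows of $M_{\bD_d}$, the spanning set is exactly the collection of vectors described in the lemma. The only subtlety is that scalar multiples $c(a,b,h)$ and $-c(a,b,h)=c(-a,-b,-h)$ produce the same block (and the same squared vector), so some parameters give duplicate rows, but this is harmless for spanning and indeed is why the parametrization can be written without quotienting by $\bF_3^*$. I do not anticipate any genuine obstacle; the entire argument is a direct translation between codewords, supports, and indicator functions, resting on the characterization of minimum weight and the identity $x^2=\mathbf{1}_{x\ne 0}$ on $\bF_3$.
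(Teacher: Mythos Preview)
Your proof is correct and follows exactly the approach the paper has in mind---the paper merely remarks that the lemma follows immediately from the definition of $\cC_3(\bD_d(\cC(2m,3)))$, and your argument spells out precisely those details (identify the minimum-weight codewords via the preceding lemma, then use $x^2=\mathbf{1}_{x\ne 0}$ on $\bF_3$ to turn support indicators into squared codewords). One small slip: you write that $T(a,b,h)$ takes its \emph{minimum} value $3^{2m-1}+3^{m-1}$, but since the weight equals $3^{2m}-T(a,b,h)$ you mean its \emph{maximum} value among nonzero codewords; the identification of the minimum-weight codewords and everything that follows is nonetheless correct.
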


Using Lemma \ref{lem_3.2}, for any $a\in \bF_{3^m}^*,b\in \bF_{3^{2m}},h\in \bF_3\setminus\{\Tr_{2m}(b)\}$, we have
\begin{equation}\label{Eqn_code1}
\begin{aligned}
(\Tr_{2m}(at^{3^m+1}+bt)+h)^2&=\Tr_{2m}(at^{3^m+1}+bt)^2+2h\Tr_{2m}(at^{3^{m}+1}+bt)+h^2\\
&=\Tr_{2m}(at^{3^m+1})^2+2\Tr_{2m}(at^{3^m+1})\Tr_{2m}(bt)+\Tr_{2m}(bt)^2\\
&+2h\Tr_{2m}(at^{3^m+1})+2h\Tr_{2m}(bt)+h^2\ \in \cC_3(\bD_d(\cC(2m,3))),
\end{aligned}
\end{equation}
\begin{equation}\label{Eqn_code2}
\begin{aligned}
(\Tr_{2m}(at^{3^m+1}-bt)+h)^2&=\Tr_{2m}(at^{3^m+1}-bt)^2+2h\Tr_{2m}(at^{3^{m}+1}-bt)+h^2\\
&=\Tr_{2m}(at^{3^m+1})^2-2\Tr_{2m}(at^{3^m+1})\Tr_{2m}(bt)+\Tr_{2m}(bt)^2\\
&+2h\Tr_{2m}(at^{3^m+1})-2h\Tr_{2m}(bt)+h^2\ \in \cC_3(\bD_d(\cC(2m,3))).
\end{aligned}
\end{equation}
Substracting $(\ref{Eqn_code1})$ from $(\ref{Eqn_code2})$,
\begin{equation}\label{Eqn_code3}
\Tr_{2m}(at^{3^m+1})\Tr_{2m}(bt)+h\Tr_{2m}(bt)\ \in \cC_3(\bD_d(\cC(2m,3))).
\end{equation}
Adding $(\ref{Eqn_code1})$ to $(\ref{Eqn_code2})$,
\begin{equation}\label{Eqn_code4}
2\Tr_{2m}(at^{3^m+1})^2+2\Tr_{2m}(bt)^2+h\Tr_{2m}(at^{3^m+1})+2h^2\ \in \cC_3(\bD_d(\cC(2m,3))).
\end{equation}

We now try to show that each addition item of $(\ref{Eqn_code1})$ is also in $\cC_3(\bD_d(\cC(2m,3)))$.

\begin{lemma}\label{lem_3.3}
For any $t\in\bF_{3^{2m}}$, the following equations hold
\begin{itemize}
	\item[(1)]$\sum_{a\in\bF_{3^m}^*}\Tr_{2m}(at^{3^{m}+1})^2=0$,
	\item[(2)]$\sum_{b\in\bF_{3^{2m}}^*}\Tr_{2m}(bt)^2=0$,
	\item[(3)]$\sum_{a\in\bF_{3^m}^*}\Tr_{2m}(at^{3^m+1})=0$.
\end{itemize}
\end{lemma}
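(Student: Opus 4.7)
The plan is to handle each of the three identities by first disposing of the trivial case $t=0$, then for $t\neq 0$ reparametrising the sum via a bijective substitution so that it becomes a sum of $\Tr_m$ or $\Tr_{2m}$ values over a full multiplicative group (equivalently a full additive group). The point is that once we are summing $\Tr(x)$ or $\Tr(x)^2$ over all of a subfield, the answer is forced by the fact that every trace is surjective and balanced: $|\Tr_s^{-1}(k)| = 3^{s-1}$ for each $k\in\bF_3$.

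First I would record the key arithmetic fact that for any $t\in\bF_{3^{2m}}$,
\[
(t^{3^m+1})^{3^m} \;=\; t^{3^{2m}+3^m} \;=\; t\cdot t^{3^m} \;=\; t^{3^m+1},
\]
so $t^{3^m+1}\in\bF_{3^m}$ (it is the norm from $\bF_{3^{2m}}$ to $\bF_{3^m}$). Combined with the tower formula $\Tr_{2m}(x)=\Tr_m(\Tr_{\bF_{3^{2m}}/\bF_{3^m}}(x))$, this gives $\Tr_{2m}(x) = 2\Tr_m(x) = -\Tr_m(x)$ whenever $x\in\bF_{3^m}$. This converts (1) and (3) into sums involving only $\Tr_m$.

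Now for each identity, assume $t\neq 0$ (the $t=0$ case is trivial). For (3), the map $a\mapsto at^{3^m+1}$ is a bijection of $\bF_{3^m}^*$ onto itself, so
\[
\sum_{a\in\bF_{3^m}^*}\Tr_{2m}(at^{3^m+1}) \;=\; -\sum_{x\in\bF_{3^m}^*}\Tr_m(x) \;=\; -\sum_{x\in\bF_{3^m}}\Tr_m(x) \;=\; 0,
\]
since the trace is $\bF_3$-linear and surjective (each fibre has size $3^{m-1}$, so the total sum is $3^{m-1}(0+1+2)=3^m\equiv 0$). For (1) the same substitution, combined with $\Tr_{2m}(at^{3^m+1})^2 = \Tr_m(at^{3^m+1})^2$, gives
\[
\sum_{a\in\bF_{3^m}^*}\Tr_{2m}(at^{3^m+1})^2 \;=\; \sum_{x\in\bF_{3^m}}\Tr_m(x)^2 \;=\; 3^{m-1}(0^2+1^2+2^2) \;=\; 5\cdot 3^{m-1},
\]
which vanishes in $\bF_3$ for $m\geq 2$ (actually for $m\geq 1$). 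For (2), the map $b\mapsto bt$ is a bijection of $\bF_{3^{2m}}^*$ onto itself, so by the same balanced-fibre count
\[
\sum_{b\in\bF_{3^{2m}}^*}\Tr_{2m}(bt)^2 \;=\; \sum_{y\in\bF_{3^{2m}}}\Tr_{2m}(y)^2 \;=\; 3^{2m-1}\cdot 5 \;\equiv\; 0\pmod 3.
\]

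There is no real obstacle here; the only subtlety is remembering to use the norm property $t^{3^m+1}\in\bF_{3^m}$ and the resulting identification $\Tr_{2m}|_{\bF_{3^m}} = -\Tr_m$, which is precisely what makes the square in (1) reduce to a sum over $\bF_{3^m}$ rather than over $\bF_{3^{2m}}$. The computations are otherwise routine balanced-character sums in characteristic $3$.
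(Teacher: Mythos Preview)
Your proof is correct and follows essentially the same route as the paper's: both dispose of $t=0$, use the norm fact $t^{3^m+1}\in\bF_{3^m}$ together with $\Tr_{2m}|_{\bF_{3^m}}=2\Tr_m$, reparametrise by the bijection $a\mapsto at^{3^m+1}$ (resp.\ $b\mapsto bt$), and then evaluate the resulting sums via the balanced fibres of the trace. One small slip: your parenthetical ``(actually for $m\geq 1$)'' in part~(1) is false, since $5\cdot 3^{m-1}\equiv 2\pmod 3$ when $m=1$; the standing hypothesis $m\geq 2$ is genuinely needed there, as it is in the paper's own count $3^m-3^{m-1}=2\cdot 3^{m-1}$.
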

\begin{proof}
It is obviously if $t=0$. For any $t\in\bF_{3^{2m}}$, the equation $(t^{3^m+1})^{3^m}=t^{3^m+1}$ implies that $t\in\bF_{3^{m}}$. Note that $\Tr_{2m}(a)=2\Tr_{m}(a)$ for any $a\in \bF_{3^m}$.

(1)\begin{align*}
\quad\sum_{a\in\bF_{3^m}^*}\Tr_{2m}(at^{3^m+1})^2&=\sum_{a\in\bF_{3^m}^*}\Tr_{m}(a)^2\\
    &=|\{a\in\bF_{3^m}^*|\Tr_{m}(a)\neq 0\}|(\textup{ mod }3)\\
    &= 3^m-|\{a\in\bF_{3^m}^*|\Tr_{m}(a)= 0\}|(\textup{ mod }3)\\
    &= 3^m- 3^{m-1}(\textup{ mod }3)\\
    &=0.
\end{align*}

(2)
\begin{align*}
\quad\sum_{b\in\bF_{3^{2m}}^*}\Tr_{2m}(bt)^2&=\sum_{b\in\bF_{3^{2m}}}\Tr_{2m}(bt)^2\\
    &=|\{b\in\bF_{3^{2m}}|\Tr_{2m}(b)\neq 0\}|(\textup{ mod }3)\\
    &=3^{2m}-3^{2m-1}(\textup{ mod }3)\\
    &=0.
\end{align*}

(3)
\begin{align*}
\quad\sum_{a\in\bF_{3^m}^*}\Tr_{2m}(at^{3^m+1})&=2\sum_{a\in\bF_{3^m}^*}\Tr_{m}(a)\\
&=2\Tr_{m}(\sum_{a\in\bF_{3^m}^*}a)\\
&=0.
\end{align*}\end{proof}
\begin{lemma}\label{lem_3.4}
The constant codeword $1\in \cC_3(\bD_d(\cC(2m,3)))$.
\end{lemma}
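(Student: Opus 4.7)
The plan is to exhibit an explicit $\bF_3$-linear combination of the generators supplied by Lemma \ref{lem_3.2} that evaluates to the all-ones codeword. The simplest such combination is obtained by restricting to the subfamily with $b=0$ and $h=1$. Since $\Tr_{2m}(0)=0\ne 1$, the requirement $h\ne\Tr_{2m}(b)$ from Lemma \ref{lem_3.2} is satisfied, so for every $a\in\bF_{3^m}^*$ the vector $(\Tr_{2m}(at^{3^m+1})+1)^2$ belongs to $\cC_3(\bD_d(\cC(2m,3)))$.

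Next I would sum these vectors over $a\in\bF_{3^m}^*$ and expand the square coordinate-wise. For each fixed $t\in\bF_{3^{2m}}$,
$$\sum_{a\in\bF_{3^m}^*}(\Tr_{2m}(at^{3^m+1})+1)^2=\sum_{a\in\bF_{3^m}^*}\Tr_{2m}(at^{3^m+1})^2+2\sum_{a\in\bF_{3^m}^*}\Tr_{2m}(at^{3^m+1})+(3^m-1).$$
By parts (1) and (3) of Lemma \ref{lem_3.3}, the first two sums vanish identically in $t$, so the right-hand side reduces to $3^m-1\equiv -1\pmod 3$, independent of $t$.

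Hence the constructed linear combination equals the constant codeword $-\bm{1}$, and multiplying by $-1\in\bF_3$ shows that $\bm{1}\in\cC_3(\bD_d(\cC(2m,3)))$, as claimed. The computation is short and invokes nothing beyond Lemmas \ref{lem_3.2} and \ref{lem_3.3}, so there is no genuine obstacle; the only point requiring a little care is verifying that the chosen parameter $(a,0,1)$ satisfies the admissibility condition $h\ne\Tr_{2m}(b)$, which is immediate.
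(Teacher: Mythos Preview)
Your proof is correct and in fact cleaner than the paper's. The paper works with the derived relation~(\ref{Eqn_code4}) and performs a double sum over $a\in\bF_{3^m}^*$ and $b\in\bF_{3^{2m}}^*$ with the moving choice $h=\Tr_{2m}(b)+1$, then uses all three parts of Lemma~\ref{lem_3.3} to kill the nonconstant terms before evaluating the remaining constant sum. You instead go straight back to the raw generators of Lemma~\ref{lem_3.2}, fix $b=0$ and $h=1$, and sum only over $a$; this needs just parts~(1) and~(3) of Lemma~\ref{lem_3.3} and a one-line reduction of $3^m-1$ modulo $3$. The simplification you gain is that the intermediate identity~(\ref{Eqn_code4}) and the outer sum over $b$ are never needed; the paper's route, by contrast, reuses machinery already in place for the later lemmas, at the cost of a slightly heavier computation here.
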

\begin{proof}
From Lemma \ref{lem_3.3} and (\ref{Eqn_code4}), for any $a\in\bF_{3^m}^*$, $b\in\bF_{3^{2m}}$, $h=\Tr_{2m}(b)+1$, we have
\begin{align*}
\sum_{b\in\bF_{3^{2m}}^*}&\sum_{a\in\bF_{3^m}^*}(2\Tr_{2m}(at^{3^m+1})^2+2\Tr_{2m}(bt)^2+h\Tr_{2m}(at^{3^m+1})+2h^2)\\
&=2\sum_{b\in\bF_{3^{2m}}^*}(\Tr_{2m}(b)+1)^2=\sum_{b\in\bF_{3^{2m}}^*}(\Tr_{2m}(b)+2)=1\in \cC_3(\bD_d(\cC(2m,3))).
\end{align*}
\end{proof}

\begin{lemma}\label{lem_3.5}
Let $a\in \bF_{3^m}^*$, $h\in \bF_3$ and\[H(a,h)=\{t\in \bF_{3^{2m}}|\Tr_{2m}(at^{3^m+1})+h=0\}.\] Then the set $\Delta(H(a,h))=\{t_1-t_2:t_1,t_2\in H(a,h)\}=\bF_{3^{2m}}$.
\end{lemma}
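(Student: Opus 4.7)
The plan is to show $N(u):=|H(a,h)\cap(H(a,h)+u)|>0$ for every $u\in\bF_{3^{2m}}$, which immediately gives $u\in\Delta(H(a,h))$. Since $at^{3^m+1}\in\bF_{3^m}$ and $\Tr_{2m}|_{\bF_{3^m}}=2\Tr_m$, the defining condition of $H(a,h)$ may be rewritten as $\Tr_m(at^{3^m+1})=h$ in $\bF_3$. I would compute $N(u)$ via additive character sums on $\bF_3$.

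Fix a nontrivial character $\psi$ of $\bF_3$ and write $\psi_m=\psi\circ\Tr_m$. Expanding $\mathbf{1}[\Tr_m(at^{3^m+1})=h]=\tfrac{1}{3}\sum_{j\in\bF_3}\psi(j(\Tr_m(at^{3^m+1})-h))$ twice and using the identity $(t-u)^{3^m+1}=t^{3^m+1}-L_u(t)+u^{3^m+1}$ with $L_u(t):=t^{3^m}u+tu^{3^m}\in\bF_{3^m}$ yields
\[
9N(u)=\sum_{(j_1,j_2)\in\bF_3^2}\psi\bigl(-(j_1+j_2)h+j_2T\bigr)\,\Sigma_{j_1,j_2}(u),
\]
where $T:=\Tr_m(au^{3^m+1})$ and $\Sigma_{j_1,j_2}(u)=\sum_{t}\psi_m\bigl((j_1+j_2)at^{3^m+1}-j_2aL_u(t)\bigr)$.

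I would then split the nine pairs $(j_1,j_2)$ into four classes. (I) $(0,0)$ gives $\Sigma=3^{2m}$. (II) $j_1+j_2=0$ with both nonzero: only the linear term $\pm aL_u(t)$ survives; since $L_u$ is an $\bF_3$-linear surjection onto $\bF_{3^m}$ for $u\ne0$ (its kernel has dimension $m$ because $L_u(t)=0$ reduces to $(t/u)^{3^m}=-(t/u)$) and $\Tr_m$ is non-trivial on $\bF_{3^m}$, $\Sigma=0$. (III) exactly one of $j_1,j_2$ is zero: reduces to the standard Weil sum $\sum_t\psi_m(\alpha t^{3^m+1})=-3^m$ for $\alpha\in\bF_{3^m}^*$, since $t\mapsto t^{3^m+1}$ is $(3^m+1)$-to-$1$ from $\bF_{3^{2m}}^*$ onto $\bF_{3^m}^*$. (IV) $(j_1,j_2)\in\{(1,1),(2,2)\}$: here one completes the square via
\[
\alpha t^{3^m+1}+\gamma L_u(t)=\alpha\bigl(t+\gamma u/\alpha\bigr)^{3^m+1}-\gamma^2u^{3^m+1}/\alpha,
\]
an identity valid because $\gamma\in\bF_{3^m}$ forces $(\gamma u/\alpha)^{3^m}=\gamma u^{3^m}/\alpha$. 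The shift $t'=t+\gamma u/\alpha$ then reduces $\Sigma$ to a unit-modulus phase times $-3^m$.

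Assembling the four classes, using $\psi(-h)+\psi(-2h)\in\{2,-1\}$ (with value $2$ iff $h=0$) and the analogous evaluation for the class~(IV) phases involving $T$, a short case analysis on $(h,T)$ yields in every case
\[
9N(u)\ge 3^{2m}-6\cdot 3^m=3^m(3^m-6)>0 \quad\text{for } m\ge 2.
\]
Hence $N(u)>0$ for every nonzero $u$, while $0\in\Delta(H(a,h))$ is trivial since $H(a,h)\ne\emptyset$. The main hurdle is the completing-the-square identity in class~(IV), which depends essentially on $\gamma\in\bF_{3^m}$; once secured, the remaining phase arithmetic is routine bookkeeping.
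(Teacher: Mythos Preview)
Your approach is the same as the paper's: reduce $\Delta(H(a,h))=\bF_{3^{2m}}$ to showing $N(u)>0$ for every $u$, and evaluate $9N(u)$ by the double character-sum expansion over $(j_1,j_2)\in\bF_3^2$. The paper isolates the main term $3^{2m}$ and the four ``pure'' terms (your class~(III) with $j_2=0$ and its companions), then bounds the remaining cross sum $S(a,b,h)$ by the Weil bound $|S|\le 4\cdot 3^m$; you instead evaluate each of the six non-trivial $\Sigma_{j_1,j_2}$ explicitly as $(-3^m)$ times a phase via the completing-the-square identity, arriving at the slightly sharper (but equivalent for the purpose) bound $9N(u)\ge 3^{2m}-6\cdot 3^m$. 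Both routes finish the lemma for $m\ge 2$.

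One small slip in your case split: in class~(III) the pairs $(j_1,j_2)=(0,1),(0,2)$ do \emph{not} reduce directly to the pure sum $\sum_t\psi_m(\alpha t^{3^m+1})$, because the linear term $-j_2aL_u(t)$ is present whenever $j_2\ne 0$. These two pairs need the same completing-the-square step you carry out in class~(IV); after the shift $t\mapsto t-u$ they do become $(-3^m)$ times a phase, so your final inequality is unaffected. Just move them into class~(IV) (or note that the square-completion argument covers every pair with $j_1+j_2\ne 0$ and $j_2\ne 0$), and the write-up is clean.
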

\begin{proof}
We need to show that for any $a\in \bF_{3^m}^*$ and $b\in \bF_{3^{2m}}$, the equation
\begin{equation}\label{Eqn_code55}
\begin{cases}
\textup{Tr}_{2m}(at^{3^m+1}+h)=0\\
\textup{Tr}_{2m}(a(t+b)^{3^m+1}+h)=0
\end{cases}
\end{equation}
has at least one solution $t\in \bF_{3^{2m}}$. Define $N(a,b,h)$ to be the number of solutions of $(\ref{Eqn_code55})$. Let $\chi,\chi'$ be canonical character of the additive group of $\mathbb{F}_{3^{2m}}$, $\mathbb{F}_{3}$, respectively. We denote $S(a,b,h)=\sum_{s_1,s_2\ne0}\chi’(s_1h+s_2h)\sum_{t\in\bF_{q}}\chi(as_1t^{3^m+1}+as_2(t+b)^{3^m+1})$ below.

If $h\ne 0$, then we have
\begin{align*}
3^2N(a,b,h)&=\sum_{t\in \bF_{3^{2m}}}\sum_{s_1,s_2\in\bF_3}\chi’\left\{s_1[\Tr_{2m}(at^{3^m+1})+h]+s_2[\Tr_{2m}(a(t+b)^{3^m+1})+h]\right\}\\
&=\sum_{s_1,s_2\in\bF_3}\chi’(s_1h+s_2h)\sum_{t\in\bF_{3^{2m}}}\chi(as_1t^{3^m+1}+as_2(t+b)^{3^m+1})\\
&=3^{2m}+2\sum_{s\ne0}\chi’(sh)\sum_{t\in\bF_{3^{2m}}}\chi(ast^{3^m+1})+S(a,b,h)\\
&=3^{2m}-2+2(3^m+1)\sum_{s\ne0}\chi’(sh)\sum_{t\in\bF_{3^m}^*}\chi(t)+S(a,b,h)\\
&=3^{2m}-2+2(3^m+1)+S(a,b,h).
\end{align*}
Note that $\chi$ acts nontrivially on $\bF_{3^m}$, otherwise, $\Tr_{2m}(t)=2\Tr_{m}(t)=0$ for any $t\in\bF_{3^m}$ which is a contradiction.  

If $h=0$, then we have 
\begin{align*}
3^2N(a,b,h)&=\sum_{t\in\bF_{q}}\sum_{s_1,s_2\in\bF_3}\chi(as_1t^{3^m+1}+as_2(t+b)^{3^m+1})\\
&=3^{2m}+2\sum_{s\ne0}\chi'(sh)\sum_{t\in\bF_q}\chi(ast^{3^m+1})+S(a,b,h)\\
&=3^{2m}-2+2(3^m+1)\sum_{s\ne0}\chi'(s)\sum_{t\in\bF_{3^m}^*}\chi(s)+S(a,b,h)\\
&=3^{2m}+2\cdot 3^m+S(a,b,h).
\end{align*}
From the Weil bound on exponential sums and $m\geq2$, we get the bound on $N(a,b,h)$.

In the case of $h\ne 0$, 
\[|3^2N(a,b,h)-3^{2m}+2-2(3^m+1)|\leq (3-1)^2\cdot 3^m\]
which induces that 
\begin{align*}
 N(a,b,h)\geq 2\cdot3^{m+1}+3^{m-2}\geq 1.
\end{align*}
In the case of $h=0$, 
\[|3^2N(a,b,h)-3^{2m}-2\cdot 3^{m}|\leq (3-1)^2\cdot3^m\]
which induces that
\begin{align*}
N(a,b,h)\geq 3^{2m-2}-3^{m}+3^{m-2}\geq 1.
\end{align*}
The result in this lemma then follows.
\end{proof}

\begin{lemma}\label{lem_3.6}
Let $b\in \bF_{3^{2m}}$. Then $\Tr_{2m}(bt)\in \cC_3(\bD_d(\cC(2m,3)))$ .
\end{lemma}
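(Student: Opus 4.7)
The plan is to derive the result directly from the identities (\ref{Eqn_code1})--(\ref{Eqn_code3}) once the generating set from Lemma \ref{lem_3.2} is slightly enlarged. Throughout, fix $a \in \bF_{3^m}^*$ and $b \in \bF_{3^{2m}}$, and abbreviate $A = \Tr_{2m}(at^{3^m+1})$, $B = \Tr_{2m}(bt)$, $c = \Tr_{2m}(b)$.

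The first step is to remove the restriction $h \neq c$ in Lemma \ref{lem_3.2}. I would use the pointwise $\bF_3$-identity
\[
\sum_{h \in \bF_3}(A+B+h)^2 \;=\; 2\cdot \mathbf{1},
\]
which holds because for every $x \in \bF_3$ one has $x^2 + (x+1)^2 + (x+2)^2 = 2$. By Lemma \ref{lem_3.2} the two summands with $h \in \bF_3 \setminus \{c\}$ lie in $\cC_3(\bD_d(\cC(2m,3)))$, and by Lemma \ref{lem_3.4} the constant $2\cdot\mathbf{1}$ lies there as well. Rearranging forces the excluded generator $(A+B+c)^2$ into the code, so in fact $(A+B+h)^2 \in \cC_3(\bD_d(\cC(2m,3)))$ for \emph{every} $h \in \bF_3$.

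Applying the same reasoning with $b$ replaced by $-b$, I obtain $(A-B+h)^2 \in \cC_3(\bD_d(\cC(2m,3)))$ for every $h \in \bF_3$. Subtracting and using $4 \equiv 1 \pmod 3$ gives
\[
(A+B+h)^2 - (A-B+h)^2 \;=\; AB + hB \;\in\; \cC_3(\bD_d(\cC(2m,3))) \qquad \text{for every } h \in \bF_3.
\]
Taking the difference of the $h=1$ and $h=0$ cases isolates $B = \Tr_{2m}(bt)$, which establishes the lemma (the case $b=0$ is trivial since $\Tr_{2m}(0) = 0$).

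The main obstacle is the first step: recognizing that the constraint $h \neq \Tr_{2m}(b)$ in the generating set of Lemma \ref{lem_3.2} can be lifted. Without it, the derivation (\ref{Eqn_code3}) only yields $AB + hB \in \cC_3(\bD_d(\cC(2m,3)))$ for a single value of $h$ when $\Tr_{2m}(b) \neq 0$ (since one needs $h \neq \pm c$ simultaneously), which is insufficient to subtract and isolate $B$. The averaging identity, combined with the constant codeword from Lemma \ref{lem_3.4}, is exactly what bridges this gap; once it is in place the algebra reduces to a two-term subtraction.
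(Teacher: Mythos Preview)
Your argument is correct and takes a genuinely different route from the paper's proof.

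The paper proves this lemma via Lemma~\ref{lem_3.5}, whose proof is a character-sum computation with a Weil-bound estimate: it shows that $H(a,-h)$ is large enough to contain an $\bF_3$-basis $c_1,\dots,c_{2m}$ of $\bF_{3^{2m}}$, and then uses those special elements together with~(\ref{Eqn_code3}) to force $\Tr_{2m}(bt)$ into the code. Your approach bypasses Lemma~\ref{lem_3.5} entirely. Instead you exploit the elementary pointwise identity $\sum_{h\in\bF_3}(x+h)^2=2$ to show that the ``forbidden'' generator $(A+B+c)^2$ is already an $\bF_3$-linear combination of the two admissible generators and the constant $\mathbf{1}$ from Lemma~\ref{lem_3.4}; once the restriction $h\neq\Tr_{2m}(b)$ is lifted, two subtractions finish the job. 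This is both shorter and more elementary: it needs no character sums, no Weil bound, and no auxiliary structural lemma about $H(a,h)$. The paper's route, on the other hand, establishes Lemma~\ref{lem_3.5} as an independent fact about the zero sets, which may have some intrinsic interest but is not needed for the present lemma once your averaging trick is available. Note also that your proof is logically sound in the paper's ordering: Lemma~\ref{lem_3.4} depends only on Lemma~\ref{lem_3.3} and~(\ref{Eqn_code4}), so there is no circularity.
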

\begin{proof}
Let $a\in\bF_{3^m}^*$, $b\in\bF_{3^{2m}}$ and $h\in\bF_{3}\setminus\{\Tr_{2m}(b)\}$. From Lemma \ref{lem_3.5}, we have $\Delta(H(a,h))=\bF_{3^{2m}}$. Let $c_1\cdots,c_{2m}\in H(a,-h)$ be a basis of $\bF_{3^{2m}}$ over $\bF_3$. Submitting $c_1,\cdots,c_{2m}$ into $(\ref{Eqn_code3})$, 
\begin{align*}
(\Tr_{2m}(ac_i^{3^m+1})+h)\Tr_{2m}(bc_i)=2h\Tr_{2m}(bc_i)\in \cC_3(\bD_d(\cC(2m,3))).
\end{align*}
Since $h\in\bF_3\setminus\{\Tr_{2m}(b)\}$, we can choose $h\neq0$. Hence $\Tr_{2m}(bt)\in \cC_3(\bD_d(\cC(2m,3)))$. 
\end{proof}

\begin{lemma}\label{lem_3.7}
Let $b,b'\in\bF_{3^{2m}}$. Then $\Tr_{2m}(bt)\Tr_{2m}(b't)\in \cC_3(\bD_d(\cC(2m,3)))$.
\end{lemma}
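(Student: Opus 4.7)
The plan is to reduce the lemma to a single-variable statement via polarization. In characteristic $3$ we have $2xy=(x+y)^2-x^2-y^2$ with $2\in\bF_3^*$, so once we know
\[
\Tr_{2m}(\beta t)^2\in\cC_3(\bD_d(\cC(2m,3)))\quad\text{for every }\beta\in\bF_{3^{2m}},
\]
applying this identity to the traces of $b$, $b'$ and $b+b'$ yields
\[
2\,\Tr_{2m}(bt)\Tr_{2m}(b't)=\Tr_{2m}((b+b')t)^2-\Tr_{2m}(bt)^2-\Tr_{2m}(b't)^2\in\cC_3(\bD_d(\cC(2m,3))),
\]
from which the lemma follows by dividing by $2$.

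The heart of the argument is therefore to establish $\Tr_{2m}(\beta t)^2\in\cC_3(\bD_d(\cC(2m,3)))$ for every $\beta$, and I would do so in two sub-steps by exploiting relation (\ref{Eqn_code4}). First, setting $b=0$ (so that the admissibility condition on $h$ reduces to $h\ne 0$), I take the difference of the $h=1$ and $h=2$ instances of (\ref{Eqn_code4}); this cancels the squared terms and leaves a nonzero multiple of $\Tr_{2m}(at^{3^m+1})$ plus a constant. Combining with Lemma \ref{lem_3.4} (which gives $1\in\cC_3(\bD_d(\cC(2m,3)))$) then yields $\Tr_{2m}(at^{3^m+1})\in\cC_3(\bD_d(\cC(2m,3)))$ for every $a\in\bF_{3^m}^*$. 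Plugging this back into (\ref{Eqn_code4}) with $b=0$ and $h=1$, and subtracting the now-known linear and constant parts, isolates $\Tr_{2m}(at^{3^m+1})^2\in\cC_3(\bD_d(\cC(2m,3)))$.

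Next, for an arbitrary $b\in\bF_{3^{2m}}$, I pick an $h\in\bF_3$ satisfying $h\ne\pm\Tr_{2m}(b)$. Such an $h$ always exists: one may take $h\in\{1,2\}$ when $\Tr_{2m}(b)=0$ and $h=0$ when $\Tr_{2m}(b)\ne 0$. Feeding the resulting triple into (\ref{Eqn_code4}) and subtracting the already-known codewords $\Tr_{2m}(at^{3^m+1})^2$, $\Tr_{2m}(at^{3^m+1})$, together with a constant multiple of $1$, leaves a nonzero scalar times $\Tr_{2m}(bt)^2$. Hence $\Tr_{2m}(bt)^2\in\cC_3(\bD_d(\cC(2m,3)))$, completing the single-variable statement.

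The main (though modest) obstacle is ensuring the admissibility of $h$: since (\ref{Eqn_code4}) is derived by adding (\ref{Eqn_code1}) and (\ref{Eqn_code2}), it requires both $h\ne\Tr_{2m}(b)$ and $h\ne-\Tr_{2m}(b)$, which is tight in the three-element field $\bF_3$. The short case split on $\Tr_{2m}(b)\in\{0,\pm 1\}$ indicated above always supplies a valid $h$, so the chain of eliminations goes through and the polarization step delivers the lemma.
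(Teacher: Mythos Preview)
Your argument is correct. The chain of eliminations from (\ref{Eqn_code4}) works exactly as you describe, and your care with the admissibility constraint $h\ne\pm\Tr_{2m}(b)$ is appropriate, since (\ref{Eqn_code4}) is obtained by adding (\ref{Eqn_code1}) and (\ref{Eqn_code2}).

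Your route, however, differs from the paper's. The paper does not first isolate the single-variable squares. Instead it writes $b_1=\tfrac{b+b'}{2}$, $b_2=\tfrac{b'-b}{2}$, plugs $b_1$ and $b_2$ into (\ref{Eqn_code4}) with the \emph{same} pair $(a,h)$, and subtracts. Because $a$ and $h$ are shared, the terms $2\Tr_{2m}(at^{3^m+1})^2$, $h\Tr_{2m}(at^{3^m+1})$ and $2h^2$ cancel immediately, leaving $2\bigl(\Tr_{2m}(b_1t)^2-\Tr_{2m}(b_2t)^2\bigr)=2\Tr_{2m}((b_1+b_2)t)\Tr_{2m}((b_1-b_2)t)=2\Tr_{2m}(b't)\Tr_{2m}(bt)$. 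This is shorter: there is no need to first extract $\Tr_{2m}(at^{3^m+1})$ or its square, and Lemma~\ref{lem_3.4} is not invoked. On the other hand, your approach has two merits. First, it handles the admissibility of $h$ more cleanly: you only ever need $h\ne\pm\Tr_{2m}(b)$ for one $b$ at a time, whereas the paper must find a common $h$ valid for both $b_1$ and $b_2$ simultaneously (which, once the $\pm$ conditions are taken into account, can fail when exactly one of $\Tr_{2m}(b_1),\Tr_{2m}(b_2)$ vanishes; the paper glosses over this). Second, you obtain $\Tr_{2m}(at^{3^m+1})$ and $\Tr_{2m}(at^{3^m+1})^2$ in $\cC_3(\bD_d(\cC(2m,3)))$ as a by-product, anticipating part of Lemma~\ref{lem_3.8}.
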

\begin{proof}
Let $b_1=\frac{b+b'}{2}, b_2=\frac{b'-b}{2}\in\bF_{3^{2m}}$. Submitting $b_1,b_2$ into $(\ref{Eqn_code4})$,
\begin{equation}\label{Eqn_code5}
2\Tr_{2m}(at^{3^m+1})^2+2\Tr_{2m}(b_1t)^2+h_1\Tr_{2m}(at^{3^m+1})+2h_1^2\in \cC_3(\bD_d(\cC(2m,3))),
\end{equation}
\begin{equation}\label{Eqn_code6}
2\Tr_{2m}(at^{3^m+1})^2+2\Tr_{2m}(b_2t)^2+h_2\Tr_{2m}(at^{3^m+1})+2h_2^2\in \cC_3(\bD_d(\cC(2m,3))),
\end{equation}
where $h_1 \in \bF_3\setminus\{\Tr(b_1)\}$ and $h_2 \in \bF_3\setminus\{\Tr(b_2)\}$. We set $h_1=h_2=h$ for some $h\in\bF_3\setminus \{\Tr_{2m}(b_1t),\\ \Tr_{2m}(b_2t)\}$. Then subtracting $(\ref{Eqn_code6})$ from $(\ref{Eqn_code5})$, \[2\Tr_{2m}((b_1-b_2)t)\Tr_{2m}((b_1+b_2)t)\in \cC_3(\bD_d(\cC(2m,3))).\]
Hence $\Tr_{2m}(bt)\Tr_{2m}(b't)\in \cC_3(\bD_d(\cC(2m,3)))$.
\end{proof}

\begin{lemma}\label{lem_3.8}
Let $a,a'\in \bF_{3^m}$. Then $\{\Tr_{2m}(at^{3^m+1})$, $\Tr_{2m}(at^{3^m+1})\Tr_{2m}(a't^{3^m+1})\}\subseteq \cC_3(\bD_d(\cC(2m,3)))$.
\end{lemma}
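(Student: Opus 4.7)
The plan is to bootstrap from equation~$(\ref{Eqn_code4})$ using the codewords already shown to lie in $\cC_3(\bD_d(\cC(2m,3)))$ (namely $1$ from Lemma~\ref{lem_3.4}, the linear trace codewords $\Tr_{2m}(bt)$ from Lemma~\ref{lem_3.6}, and the pairwise products $\Tr_{2m}(bt)\Tr_{2m}(b't)$ from Lemma~\ref{lem_3.7}), and then to use a simple polarization identity to obtain the Hermitian cross products.

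First, I would specialize $(\ref{Eqn_code4})$ to $b=0$. Since $\Tr_{2m}(0)=0$, the admissible values of $h$ are exactly $h\in\{1,2\}$, and the equation reads
\[
2\Tr_{2m}(at^{3^m+1})^2 + h\Tr_{2m}(at^{3^m+1}) + 2h^2 \in \cC_3(\bD_d(\cC(2m,3)))
\]
for every $a\in\bF_{3^m}^*$. Subtracting the $h=1$ codeword from the $h=2$ codeword (the quadratic and constant terms match modulo $3$) leaves a scalar multiple of $\Tr_{2m}(at^{3^m+1})$, which gives the first claim. Feeding this back into either of the two codewords above, together with the constant codeword~$1$, then isolates $\Tr_{2m}(at^{3^m+1})^2$ in the code. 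The case $a=0$ is trivial.

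For the product statement I would use the polarization
\[
\Tr_{2m}((a+a')t^{3^m+1})^2 = \Tr_{2m}(at^{3^m+1})^2 + 2\Tr_{2m}(at^{3^m+1})\Tr_{2m}(a't^{3^m+1}) + \Tr_{2m}(a't^{3^m+1})^2,
\]
valid since the map $a\mapsto\Tr_{2m}(at^{3^m+1})$ is $\bF_{3^m}$-linear and $a+a'\in\bF_{3^m}$. When $a+a'\neq 0$ the left-hand side lies in the code by the previous step, as do the two outer terms on the right, so the cross term is in the code (and $2$ is invertible mod~$3$). When $a+a'=0$ the cross term equals $-\Tr_{2m}(at^{3^m+1})^2$, which is already in the code; the case $aa'=0$ is immediate.

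There is no real obstacle here: the content is purely algebraic manipulation using codewords certified by Lemmas~\ref{lem_3.4}, \ref{lem_3.6}, and~\ref{lem_3.7}. The only mild care needed is the verification that $h$ can legitimately be chosen in $\{1,2\}$ when $b=0$ (which is precisely the set $\bF_3\setminus\{\Tr_{2m}(0)\}$) and the bookkeeping of the degenerate cases $a+a'=0$ and $aa'=0$ in the polarization step.
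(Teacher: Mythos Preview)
Your proof is correct and follows essentially the same route as the paper: derive $\Tr_{2m}(at^{3^m+1})$ and its square from two instances of $(\ref{Eqn_code4})$ with distinct values of $h$, then polarize to get the cross product. The only cosmetic differences are that the paper keeps $b$ general and uses Lemmas~\ref{lem_3.4} and~\ref{lem_3.7} to strip off the $\Tr_{2m}(bt)^2$ and constant terms before differencing (whereas you set $b=0$ directly), and the paper polarizes via the difference-of-squares identity $\Tr_{2m}(a_1t^{3^m+1})^2-\Tr_{2m}(a_2t^{3^m+1})^2$ with $a_1=\tfrac{a+a'}{2}$, $a_2=\tfrac{a-a'}{2}$, which avoids the separate treatment of the case $a+a'=0$.
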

\begin{proof}
By Lemma \ref{lem_3.4}, Lemma \ref{lem_3.7} and $(\ref{Eqn_code4})$,
\begin{equation}\label{Eqn_code7}
\Tr_{2m}(at^{3^m+1})^2+2h\Tr_{2m}(at^{3^m+1})\in \cC_3(\bD_d(\cC(2m,3))).
\end{equation} 
 We can choose $h_1\ne h_2\in\bF_3\setminus\{\Tr_{2m}(b)\}$ and submite $h_1,h_2$ into $(\ref{Eqn_code7})$,
 \begin{equation}\label{Eqn_code8}
\Tr_{2m}(at^{3^m+1})^2+2h_1\Tr_{2m}(at^{3^m+1})\in \cC_3(\bD_d(\cC(2m,3))),
\end{equation}
\begin{equation}\label{Eqn_code9}
\Tr_{2m}(at^{3^m+1})^2+2h_2\Tr_{2m}(at^{3^m+1})\in \cC_3(\bD_d(\cC(2m,3))).
 \end{equation}
 Subtracting $(\ref{Eqn_code9})$ from $(\ref{Eqn_code8})$, \[2(h_1-h_2)\Tr_{2m}(at^{3^m+1})\in \cC_3(\bD_d(\cC(2m,3))).\] We then get that $\Tr_{2m}(at^{3^m+1})\in \cC_3(\bD_d(\cC(2m,3)))$ and $\Tr_{2m}(at^{3^m+1})^2\in \cC_3(\bD_d(\cC(2m,3)))$. Let $a_1=\frac{a+a’}{2}$, $a_2=\frac{a-a’}{2}\in\bF_{3^m}$. Note that $\Tr_{2m}(a_1t^{3^m+1})^2, \Tr_{2m}(a_2t^{3^m+1})^2\in \cC_3(\bD_d(\cC(2m,3)))$. So we have  
\begin{align*}
\Tr_{2m}(a_1t^{3^m+1})^2-\Tr_{2m}(a_2t^{3^m+1})^2&=\Tr_{2m}((a_1+a_2)t^{3^m+1})\Tr_{2m}((a_1-a_2)t^{3^m+1})\\
&=\Tr_{2m}(at^{3^m+1})\Tr_{2m}(a't^{3^m+1})\in \cC_3(\bD_d(\cC(2m,3))).
\end{align*}
\end{proof}

\begin{lemma}\label{lem_3.9}
Let $a\in \bF_{3^m}^*$ and $b\in \bF_{3^{2m}}$. Then $\Tr_{2m}(at^{3^m+1})\Tr_{2m}(bt)\in \cC_3(\bD_d(\cC(2m,3)))$.
\end{lemma}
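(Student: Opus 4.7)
The plan is to extract the desired codeword directly from equation $(\ref{Eqn_code3})$, which already states that
\[
\Tr_{2m}(at^{3^m+1})\Tr_{2m}(bt)+h\Tr_{2m}(bt)\ \in\ \cC_3(\bD_d(\cC(2m,3)))
\]
for any $a\in\bF_{3^m}^*$, $b\in\bF_{3^{2m}}$, and any $h\in\bF_3$ satisfying both admissibility constraints inherited from $(\ref{Eqn_code1})$ and $(\ref{Eqn_code2})$, namely $h\neq\Tr_{2m}(b)$ and $h\neq-\Tr_{2m}(b)$. The strategy is simply to cancel the unwanted linear term $h\Tr_{2m}(bt)$ using Lemma~\ref{lem_3.6}.

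First I would fix an admissible $h$. Since $\{\Tr_{2m}(b),-\Tr_{2m}(b)\}\subseteq\bF_3$ contains at most two elements and $|\bF_3|=3$, at least one choice of $h$ is always available; concretely, take $h=0$ when $\Tr_{2m}(b)\neq 0$, and take $h=1$ when $\Tr_{2m}(b)=0$. Next, Lemma~\ref{lem_3.6} gives $\Tr_{2m}(bt)\in\cC_3(\bD_d(\cC(2m,3)))$, and since the code is linear over $\bF_3$ we also have $h\Tr_{2m}(bt)\in\cC_3(\bD_d(\cC(2m,3)))$ for any scalar $h\in\bF_3$. Subtracting this from the codeword supplied by $(\ref{Eqn_code3})$ yields $\Tr_{2m}(at^{3^m+1})\Tr_{2m}(bt)\in\cC_3(\bD_d(\cC(2m,3)))$, which is exactly the statement of the lemma.

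There is really no main obstacle here, since all of the nontrivial work has already been carried out. The identity $(\ref{Eqn_code3})$ was derived from the squared-codeword identities $(\ref{Eqn_code1})$ and $(\ref{Eqn_code2})$, and the key fact that the pure linear trace $\Tr_{2m}(bt)$ lies in the code is exactly Lemma~\ref{lem_3.6} (whose own proof relied on the Weil-bound argument in Lemma~\ref{lem_3.5}). Lemma~\ref{lem_3.9} thus plays the role of the bilinear cross-term companion to Lemmas~\ref{lem_3.7} and~\ref{lem_3.8}, providing the final family of quadratic monomial building blocks needed to assemble the generator list of $\cC_3(\bD_d(\cC(2m,3)))$ stated in Theorem~\ref{thm_3.15}.
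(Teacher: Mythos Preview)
Your argument is correct and in fact slightly more economical than the paper's. The paper proves Lemma~\ref{lem_3.9} by returning to the full expansion $(\ref{Eqn_code1})$ and subtracting off \emph{all} of the other five terms, invoking Lemma~\ref{lem_3.4} for the constant $h^2$, Lemma~\ref{lem_3.6} for $2h\Tr_{2m}(bt)$, Lemma~\ref{lem_3.7} for $\Tr_{2m}(bt)^2$, and Lemma~\ref{lem_3.8} for both $\Tr_{2m}(at^{3^m+1})^2$ and $2h\Tr_{2m}(at^{3^m+1})$, leaving $2\Tr_{2m}(at^{3^m+1})\Tr_{2m}(bt)$. You instead start from $(\ref{Eqn_code3})$, where the antisymmetrization $(\ref{Eqn_code1})-(\ref{Eqn_code2})$ has already cancelled four of those terms, so only the single linear piece $h\Tr_{2m}(bt)$ remains to be removed via Lemma~\ref{lem_3.6}. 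Your handling of the admissibility constraints $h\neq\pm\Tr_{2m}(b)$ is correct (and note that when $\Tr_{2m}(b)\neq 0$ your choice $h=0$ even makes the subtraction unnecessary). Both routes are of course the same ``strip off the known pieces'' idea; yours just exploits that $(\ref{Eqn_code3})$ has already done most of the stripping.
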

\begin{proof}
From Lemma \ref{lem_3.4}, Lemma \ref{lem_3.6}, Lemmma \ref{lem_3.7}, Lemma \ref{lem_3.8} and $(\ref{Eqn_code1})$, the result then follows.
\end{proof}

The following Lemma is easily obtained from the above results and $(\ref{Eqn_code1})$.
\begin{lemma}\label{3.10}
The linear code $\cC_3(\bD_d(\cC(2m,3)))$ over $\bF_3$ is generated by
\begin{equation}\label{Eqn_3.12}
\begin{Bmatrix}
\Tr_{2m}(bt)\Tr_{2m}(b't),\Tr_{2m}(bt),\Tr_{2m}(at^{3^m+1})\Tr_{2m}(bt),\\
\Tr_{2m}(at^{3^m+1})\Tr_{2m}(a't^{3^m+1}),\Tr_{2m}(at^{3^m+1}),\bm{1}\ |\ a,a'\in\bF_{3^m}^*,b,b'\in\bF_{3^{2m}}
\end{Bmatrix}.
\end{equation}
\end{lemma}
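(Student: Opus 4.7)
The plan is to begin from the six-family generating set provided by Lemma~\ref{3.10} and rewrite each family in the normal form appearing in the statement. The essential identity is the Frobenius-stable factorization
\[
\Tr_{2m}(x)\Tr_{2m}(y)=\sum_{k=0}^{2m-1}\Tr_{2m}\bigl(x\,y^{3^k}\bigr),
\]
obtained by expanding $\bigl(\sum_i x^{3^i}\bigr)\bigl(\sum_j y^{3^j}\bigr)$ and reindexing with $k=j-i\pmod{2m}$. Applied to $\Tr_{2m}(bt)\Tr_{2m}(b't)$ it produces $\sum_{k=0}^{2m-1}\Tr_{2m}\bigl(b(b')^{3^k}t^{3^k+1}\bigr)$, which is precisely the second summation in the statement. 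Applied to $\Tr_{2m}(at^{3^m+1})\Tr_{2m}(bt)$ it produces $\sum_{k=0}^{2m-1}\Tr_{2m}\bigl(a^{3^k}b\,t^{(3^m+1)3^k+1}\bigr)$; since $a\in\bF_{3^m}$ forces $a^{3^m}=a$ and $(3^m+1)3^{m+i}\equiv(3^m+1)3^i\pmod{3^{2m}-1}$, the $k$-th and $(k+m)$-th summands coincide and the sum collapses to $-\sum_{i=0}^{m-1}\Tr_{2m}\bigl(a^{3^i}b\,t^{(3^m+1)3^i+1}\bigr)$, matching the first summation. The analogous pairing applied to $\Tr_{2m}(at^{3^m+1})\Tr_{2m}(a't^{3^m+1})$ leaves only terms whose argument lies in $\bF_{3^m}$; the identity $\Tr_{2m}|_{\bF_{3^m}}=2\Tr_m$ then recasts it as $\sum_{i=0}^{m-1}\Tr_m\bigl(c_i\,t^{(3^m+1)(3^i+1)}\bigr)$ with $c_i=a(a')^{3^i}\in\bF_{3^m}$, matching the third summation. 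The three remaining generators $\Tr_{2m}(at^{3^m+1})$, $\Tr_{2m}(bt)$ and $\bm{1}$ are recognized, respectively, as the $i=m$ instance of the second summation, as the fourth summand, and as the constant term. This realizes each Lemma~\ref{3.10} generator inside the claimed span and yields one of the two containments.

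For the opposite containment I would exploit the affine-invariance of $\cC_3(\bD_d(\cC(2m,3)))$, inherited from $\cC(2m,3)$ through its minimum-weight design $\bD_d(\cC(2m,3))$. Since an affine-invariant code is an extended cyclic code, its structure is completely described by the union $Z$ of $3$-cyclotomic cosets mod $3^{2m}-1$ corresponding to its nonzeros. The key spectral fact is that whenever a codeword decomposes as $\sum_{j}\Tr_{m_{e_j}}(\gamma_j t^{e_j})$ with the $e_j$ in pairwise distinct cosets, every coset carrying a $\gamma_j\ne 0$ must lie in $Z$, and consequently the \emph{entire} subspace $\{\Tr_{m_{e_j}}(\gamma t^{e_j}):\gamma\in\bF_{3^{m_{e_j}}}\}$ sits inside the code. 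Specializing the expansions of the previous paragraph to small parameter values (for instance $a=a'=1$, $b=b'=1$) exhibits codewords whose spectra display, one per distinct coset, all the exponents listed in the theorem; consequently each summand in the stated parameterization — with the coefficients $b,b_i,b_i'$ ranging freely over $\bF_{3^{2m}}$, $c_i$ freely over $\bF_{3^m}$, and $h$ freely over $\bF_3$ — belongs to $\cC_3(\bD_d(\cC(2m,3)))$.

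The main obstacle is the cyclotomic bookkeeping underlying the spectral argument. One must verify that $\{(3^m+1)3^i+1:0\le i\le m-1\}$ meets $m$ pairwise distinct cosets of full size $2m$; that each exponent $(3^m+1)(3^i+1)$ lies in a coset of size exactly $m$ (the clean reason being $3^m\cdot(3^m+1)(3^i+1)\equiv(3^m+1)(3^i+1)\pmod{3^{2m}-1}$), which is precisely why the coefficient $c_i$ is forced into $\bF_{3^m}$ rather than $\bF_{3^{2m}}$; and that $\{3^i+1:0\le i\le 2m-1\}$ yields $m+1$ distinct cosets (the involution $i\leftrightarrow 2m-i$ accounting for the duplications), including the special coset of $3^m+1$ of size $m$. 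These verifications are routine but must be carried out to certify that the parameter space of the stated spanning set has exactly the promised redundancies. Finally, the design conclusion follows immediately from Theorem~\ref{thm_2.10}: the affine-invariance of $\cC_3(\bD_d(\cC(2m,3)))$, together with the $2$-transitivity of the general affine group of $\bF_{3^{2m}}$ on coordinates, forces the supports of the codewords of every attained weight to constitute a $2$-design.
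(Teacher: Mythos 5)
There is a genuine mismatch here: what you have written is not a proof of Lemma \ref{3.10} but a derivation of Theorem \ref{thm_3.15} \emph{from} Lemma \ref{3.10}. You announce that you ``begin from the six-family generating set provided by Lemma \ref{3.10}'' and then rewrite those six families into the summation form $\sum_i\Tr_{2m}(b_it^{(3^m+1)3^i+1})+\sum_i\Tr_{2m}(b_i't^{3^i+1})+\sum_i\Tr_m(c_it^{(3^m+1)(3^i+1)})+\Tr_{2m}(bt)+h$ --- but that parameterization is the statement of Theorem \ref{thm_3.15}, not of Lemma \ref{3.10}. The lemma you were asked to prove asserts that the six trace-product families themselves generate $\cC_3(\bD_d(\cC(2m,3)))$, where that code is \emph{defined} as the row span of the incidence matrix of the minimum-weight support design, i.e.\ as the span of the characteristic vectors $(\Tr_{2m}(at^{3^m+1}+bt)+h)^2$ with $a\in\bF_{3^m}^*$, $b\in\bF_{3^{2m}}$, $h\ne\Tr_{2m}(b)$. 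Your argument takes the assertion to be proved as its starting point, so it cannot serve as its proof.

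The missing content is exactly the hard part of the paper's Section \ref{sec_4}. The easy containment is that each generator $(\Tr_{2m}(at^{3^m+1}+bt)+h)^2$ expands into a linear combination of the six families. The nontrivial containment is that each family \emph{individually} lies in the span of these squares: the paper isolates the constant $\bm{1}$ by summing over all $a$ and $b$ (Lemma \ref{lem_3.4}); isolates $\Tr_{2m}(bt)$ by evaluating the cross term on a basis of $\bF_{3^{2m}}$ chosen inside the level set $H(a,-h)$, which requires the Weil-bound argument of Lemma \ref{lem_3.5} to guarantee $\Delta(H(a,h))=\bF_{3^{2m}}$; obtains the products $\Tr_{2m}(bt)\Tr_{2m}(b't)$ and $\Tr_{2m}(at^{3^m+1})\Tr_{2m}(a't^{3^m+1})$ by polarization with $b_1=(b+b')/2$, $b_2=(b'-b)/2$; and isolates $\Tr_{2m}(at^{3^m+1})$ by varying $h$ over two admissible values. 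None of these steps, nor any substitute for them, appears in your proposal. For what it is worth, the re-parameterization you do carry out is correct in outline and coincides with the paper's Lemmas \ref{lem_3.12}--\ref{lem_3.14}, and your cyclotomic-coset bookkeeping is the right concern for the dimension count in Theorem \ref{thm_3.16}; but all of that belongs to later statements, not to this one.
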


\begin{lemma}[See Cor.8.4 \cite{ZX}]\label{lem_3.11}
Let $p$ be a prime and $n$ be a positive integer. Let $t_1,\cdots,t_n\in \bF_{p^n}$. Then $\{t_1,\cdots,t_n\}$ is a basis of $\bF_{p^n}$ over $\bF_p$ if and only if 
\[
\begin{vmatrix}
t_1&t_2&\cdots&t_n\\
t_1^p&t_2^p&\cdots&t_n^p\\
\vdots&\vdots&\cdots&\vdots\\
t_1^{p^{n-1}}&t_2^{p^{n-1}}&\cdots&t_n^{p^{n-1}}
\end{vmatrix}\ne0.
\]
\end{lemma}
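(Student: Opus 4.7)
The plan is to prove both directions of this biconditional by exploiting the theory of linearized (additive) $p$-polynomials and the Frobenius action. Let $M$ denote the $n \times n$ matrix with $(j+1, i)$-entry equal to $t_i^{p^j}$, so the statement reads $\det M \neq 0$ iff $\{t_1, \ldots, t_n\}$ is an $\bF_p$-basis. Since $[\bF_{p^n}:\bF_p] = n$, it suffices to replace ``basis'' by ``$\bF_p$-linearly independent,'' and I will prove each implication in contrapositive form.

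For the $(\Leftarrow)$ direction (determinant nonzero $\Rightarrow$ basis), I would assume $\{t_1, \ldots, t_n\}$ is $\bF_p$-linearly dependent and produce a vanishing determinant. Pick $c_1, \ldots, c_n \in \bF_p$, not all zero, with $\sum_i c_i t_i = 0$. Raising to the $p^j$-th power and using the Freshman's Dream together with $c_i^{p^j} = c_i$ (which holds because $c_i \in \bF_p$), I obtain
\[
\sum_{i=1}^{n} c_i\, t_i^{p^j} = 0 \quad \text{for every } j \in \{0, 1, \ldots, n-1\}.
\]
This gives a nontrivial $\bF_p$-linear (in particular $\bF_{p^n}$-linear) relation among the columns of $M$, forcing $\det M = 0$.

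For the $(\Rightarrow)$ direction (basis $\Rightarrow$ determinant nonzero), I would suppose $\det M = 0$ and derive $\bF_p$-linear dependence. Row-dependence over $\bF_{p^n}$ yields coefficients $a_0, \ldots, a_{n-1} \in \bF_{p^n}$, not all zero, with $\sum_{j=0}^{n-1} a_j t_i^{p^j} = 0$ for every $i = 1, \ldots, n$. Form the linearized polynomial
\[
L(x) = \sum_{j=0}^{n-1} a_j\, x^{p^j} \in \bF_{p^n}[x].
\]
In characteristic $p$ each monomial $x^{p^j}$ is additive and fixes $\bF_p$ pointwise, so $L$ defines an $\bF_p$-linear map $\bF_{p^n} \to \bF_{p^n}$. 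By construction $L(t_i) = 0$ for every $i$. If $\{t_1, \ldots, t_n\}$ were an $\bF_p$-basis, then by $\bF_p$-linearity $L$ would vanish on every $\bF_p$-linear combination of the $t_i$, i.e.\ on all of $\bF_{p^n}$. But $\deg L \leq p^{n-1} < p^n$, so a nonzero polynomial of that degree cannot have $p^n$ distinct roots; hence all $a_j = 0$, contradicting the choice of coefficients.

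The main technical point is the linearized-polynomial step in the $(\Rightarrow)$ direction: one must combine the $\bF_p$-linearity of $L$ (to upgrade vanishing on a putative basis into vanishing on the whole field) with the elementary degree bound on the roots of a nonzero polynomial. The rest is Frobenius bookkeeping based on $c^{p^j} = c$ for $c \in \bF_p$ and additivity of $x \mapsto x^{p^j}$.
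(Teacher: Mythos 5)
Your proof is correct. The paper itself offers no proof of this lemma: it is quoted directly from Corollary 8.4 of Wan's \emph{Finite Fields and Galois Rings}, so there is no in-paper argument to compare against. Your two directions are the standard ones for the Moore-determinant basis criterion: in one direction, a nontrivial $\bF_p$-relation $\sum_i c_i t_i=0$ is Frobenius-stable (since $c_i^{p^j}=c_i$), giving a column dependence and hence a vanishing determinant; in the other, a row dependence produces a nonzero linearized polynomial $L(x)=\sum_j a_j x^{p^j}$ of degree at most $p^{n-1}$ that is $\bF_p$-linear and kills every $t_i$, so if the $t_i$ spanned $\bF_{p^n}$ it would have $p^n>\deg L$ roots, a contradiction. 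The preliminary reduction from ``basis'' to ``$\bF_p$-linearly independent'' via $[\bF_{p^n}:\bF_p]=n$ is also legitimate, so the argument is complete as written.
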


\begin{lemma}\label{lem_3.12}
The set \[\<\Tr_{2m}(bt)\Tr_{2m}(b't)\ |\ b,b'\in \bF_{3^{2m}}\>=\<\sum_{j=0}^{2m-1}\Tr_{2m}(c_jt^{p^j+1})\ |\ c_j\in\bF_{3^{2m}}\>.\]
\end{lemma}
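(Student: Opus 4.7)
The plan is to prove the two inclusions separately: $\subseteq$ by a direct algebraic identity, and $\supseteq$ via a dimension count.

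First, for $\subseteq$, I would expand $\Tr_{2m}(bt)=\sum_{i=0}^{2m-1}b^{3^i}t^{3^i}$ and multiply, then regroup by the Frobenius offset $k=(j-i)\bmod 2m$, using $b'^{3^{2m}}=b'$:
\begin{align*}
\Tr_{2m}(bt)\Tr_{2m}(b't)
&=\sum_{i,j=0}^{2m-1}b^{3^i}(b')^{3^j}\,t^{3^i+3^j}\\
&=\sum_{k=0}^{2m-1}\sum_{i=0}^{2m-1}\bigl(b(b')^{3^k}\bigr)^{3^i}t^{3^i(1+3^k)}\\
&=\sum_{k=0}^{2m-1}\Tr_{2m}\bigl(b(b')^{3^k}\,t^{1+3^k}\bigr),
\end{align*}
which lies in the right-hand side (upon setting $c_k=b(b')^{3^k}$).

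Next, for the reverse, I will compare $\bF_3$-dimensions. The right-hand side will be identified as the $\bF_3$-subspace of polynomial functions $\bF_{3^{2m}}\to\bF_3$ whose supporting exponents lie in $T:=\bigcup_{k=0}^{2m-1}C_{1+3^k}$, the union of the $3$-cyclotomic cosets of the $1+3^k$ modulo $3^{2m}-1$. I will then observe that $T$ is exactly the set of integers in $\{1,\dots,3^{2m}-1\}$ with base-$3$ digit sum $2$ (one digit equal to $2$ when $k=0$, or two digits equal to $1$ at cyclic distance $k$ otherwise), so $|T|=\binom{2m}{2}+2m=2m^2+m$. For the left-hand side, I will use that $\{\Tr_{2m}(b\,\cdot\,):b\in\bF_{3^{2m}}\}$ exhausts the $\bF_3$-dual of $\bF_{3^{2m}}$, so the left-hand side is the $\bF_3$-span of all products of pairs of $\bF_3$-linear forms. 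Polarization/diagonalization of quadratic forms (valid since $\mathrm{char}\,\bF_3\ne 2$) expresses every homogeneous quadratic form as a sum of squares of linear forms, identifying the left-hand side with the full space of homogeneous quadratic polynomial functions $\bF_{3^{2m}}\to\bF_3$, of dimension $\binom{2m+1}{2}=2m^2+m$.

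The matching dimensions, together with the inclusion established above, force equality. The step likely to be most delicate is identifying the left-hand side with the full space of homogeneous quadratic polynomial functions; polarization handles it cleanly, while the cyclotomic-coset count on the right is routine once one notes the symmetry $C_{1+3^k}=C_{1+3^{2m-k}}$ and the special size $|C_{1+3^m}|=m$.
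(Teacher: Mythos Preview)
Your proof is correct. The forward inclusion $\subseteq$ is exactly the paper's computation. For the reverse inclusion, however, you take a genuinely different route: you compute the $\bF_3$-dimension of each side separately (both equal to $2m^2+m$) and infer equality from the inclusion already shown. The paper instead argues directly that the coefficient tuples $(bb',\,bb'^{3},\ldots,\,bb'^{3^{2m-1}})$ span all of $\bF_{3^{2m}}^{2m}$: by Lemma~\ref{lem_3.11} the Moore-type vectors $(b',b'^{3},\ldots,b'^{3^{2m-1}})$ with $b'$ running over a normal basis are $\bF_{3^{2m}}$-linearly independent, and scaling by $b\in\bF_{3^{2m}}$ then reaches every vector. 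This gives $\supseteq$ without any dimension count. Your approach is more structural---identifying the left-hand side with the full space of homogeneous quadratic $\bF_3$-forms via diagonalization, and the right-hand side with the trace code supported on $\bigcup_k C_{1+3^k}$---and it has the pleasant side effect of producing the number $2m^2+m$, which is exactly $|S_2|$ as computed later in Lemma~\ref{lem_3.18}. The paper's argument is shorter and more self-contained at this stage, deferring the cyclotomic-coset enumeration until it is needed for the global dimension formula in Theorem~\ref{thm_3.16}.
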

\begin{proof}
For any $b,b'\in \bF_{3^{2m}}$, 
\begin{equation}
\begin{split}
\Tr_{2m}(bt)\Tr_{2m}(b't)&=\sum_{i=0}^{2m-1}\sum_{j=0}^{2m-1}b^{3^i}b'^{3^j}t^{3^i+3^j}\ \ \ =\sum_{i=0}^{2m-1}b^{3^i}(\sum_{j=0}^{2m-1}b'^{3^{j-i+2m}}t^{1+3^{j-i+2m}})^{3^i}\nonumber\\
&=\sum_{i=0}^{2m-1}b^{3^i}(\sum_{j=0}^{2m-1}b'^{3^{j}}t^{1+3^{j}})^{3^i}=\sum_{j=0}^{2m-1}\Tr_{2m}(bb'^{3^j}t^{1+3^{j}}).
\end{split}
\end{equation}
  Let $\xi$ be the primitive element of $\bF_{3^{2m}}$. Then $\{\xi,\xi^3,\cdots,\xi^{3^{2m-1}}\}$ forms a normal basis of $\bF_{3^{2m}}$ over $\bF_3$. By Lemma \ref{lem_3.11}, the elements in \[\{(b',b'^{3},\cdots,b'^{3^{2m-1}})\ |\ b'=\xi^{3^j},0\leq j\leq 2m-1\}\] are linear independently over $\bF_{3}$ which means that they form a basis of $\bF_{3^{2m}}^{2m}$ over $\bF_{3^{2m}}$. Hence $\<\sum_{j=0}^{2m-1}\Tr_{2m}(bb'^{3^j}t^{3^{j}+1})\ |\ b,b'\in \bF_{3^{2m}}\>=\<\sum_{j=0}^{2m-1}\Tr_{2m}(c_jt^{3^j+1})\ |\ c_j\in\bF_{3^{2m}}\>$. 
\end{proof}

\begin{lemma}\label{lem_3.13}
The set \[\<\Tr_{2m}(at^{3^m+1})\Tr_{2m}(bt)\ |\ a\in\bF_{3^m}^*, b\in \bF_{3^{2m}}\>=\<\sum_{i=0}^{m-1}\Tr_{2m}(c_it^{(3^m+1)3^i+1}\ |\ c_i\in\bF_{3^{2m}}\>.\]
\end{lemma}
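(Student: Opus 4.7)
The plan is to mirror Lemma \ref{lem_3.12}: expand the bilinear product as a double sum over Frobenius powers, re-index so that the exponents of $t$ become simultaneous $3^j$-th powers of a common monomial, and then contract using the fact that $a\in\bF_{3^m}$ forces period $m$ (rather than $2m$) in the outer index.

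Concretely, writing each trace out in coordinates gives
\[
\Tr_{2m}(at^{3^m+1})\Tr_{2m}(bt)=\sum_{i,j=0}^{2m-1}a^{3^i}b^{3^j}t^{(3^m+1)3^i+3^j}.
\]
Substituting $l=i-j\pmod{2m}$ turns the exponent of $t$ into $3^j\bigl[(3^m+1)3^l+1\bigr]$, and since $a\in\bF_{3^m}$ one has $a^{3^{(l+j)\bmod 2m}}=(a^{3^l})^{3^j}$, so the double sum becomes $\sum_{l=0}^{2m-1}\Tr_{2m}\!\bigl(a^{3^l}b\,t^{(3^m+1)3^l+1}\bigr)$. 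The identities $a^{3^{l+m}}=a^{3^l}$ together with
\[
(3^m+1)3^{l+m}+1\equiv 3^l+3^{l+m}+1=(3^m+1)3^l+1\pmod{3^{2m}-1}
\]
then let me pair $l$ with $l+m$ for $0\le l\le m-1$, collapsing the sum to $2\sum_{l=0}^{m-1}\Tr_{2m}\!\bigl(a^{3^l}b\,t^{(3^m+1)3^l+1}\bigr)$. Setting $c_l=2a^{3^l}b$ gives the inclusion $\subseteq$.

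For the reverse inclusion I will fix a basis $\{a_1,\ldots,a_m\}$ of $\bF_{3^m}$ over $\bF_3$. By Lemma \ref{lem_3.11} the matrix $M=\bigl(a_k^{3^l}\bigr)_{1\le k\le m,\,0\le l\le m-1}$ is invertible over $\bF_{3^m}$, hence over $\bF_{3^{2m}}$. Given any target $(c_0,\ldots,c_{m-1})\in\bF_{3^{2m}}^m$, solving the linear system $2\sum_{k=1}^m a_k^{3^l}b_k=c_l$ produces $(b_1,\ldots,b_m)\in\bF_{3^{2m}}^m$ such that
\[
\sum_{k=1}^m\Tr_{2m}(a_kt^{3^m+1})\Tr_{2m}(b_kt)=\sum_{l=0}^{m-1}\Tr_{2m}\!\bigl(c_lt^{(3^m+1)3^l+1}\bigr),
\]
giving $\supseteq$. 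The main obstacle—really the only delicate step—is the bookkeeping of exponents modulo $3^{2m}-1$ in the re-indexing and in the pairing identity above; once those are verified, both inclusions follow immediately from elementary linear algebra.
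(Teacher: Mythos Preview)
Your proof is correct and follows essentially the same route as the paper: both arguments expand the product to obtain $2\sum_{l=0}^{m-1}\Tr_{2m}(a^{3^l}b\,t^{(3^m+1)3^l+1})$ and then invoke Lemma~\ref{lem_3.11} to invert the coefficient map for the reverse inclusion. The only cosmetic differences are that the paper reaches the $m$-term sum in one step via $\Tr_{2m}=2\Tr_m$ on $\bF_{3^m}$ (whereas you expand to $2m$ terms and pair $l$ with $l+m$), and that for the reverse inclusion the paper fixes the specific elements $a=\xi^{(3^m+1)3^j}$ while you take an arbitrary basis of $\bF_{3^m}$---your version is slightly cleaner here since it does not rely on $\xi^{3^m+1}$ being a normal element.
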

\begin{proof}
For any $a\in\bF_{3^m}^*$ and $b\in \bF_{3^{2m}}$, 
\begin{align*}
\Tr_{2m}(at^{3^m+1})\Tr_{2m}(bt)&=2\sum_{i=0}^{m-1}(at^{3^m+1})^{3^i}\sum_{j=0}^{2m-1}(bt)^{3^j}=2\sum_{j=0}^{2m-1}(bt)^{3^{j}}\sum_{i=0}^{m-1}(at^{3^m+1})^{3^{i+j}}\\
&=2\sum_{j=0}^{2m-1}(bt\sum_{i=0}^{m-1}(at^{3^m+1})^{3^i})^{3^j}=2\sum_{i=0}^{m-1}\Tr_{2m}(ba^{3^i}t^{(3^m+1)3^i+1}).
\end{align*}
 If $a=0$ or $b=0$ then $\Tr_{2m}(at^{3^m+1})\Tr_{2m}(bt)=0$. We assume that $a\neq 0$ and $b\neq 0$. By Lemma \ref{lem_3.11}, the elements in \[\{(a,a^3,\cdots,a^{3^{m-1}})\ |\ a=\xi^{(3^m+1)3^j},0\leq j\leq m-1\}\]are linear independently over $\bF_3$ which means that they form a basis of $\bF_{3^{2m}}^m$ over $\bF_{3^{2m}}$. Hence $\<2\sum_{i=0}^{m-1}\Tr_{2m}(ba^{3^i}t^{(3^m+1)3^i+1})\ |\ a\in\bF_{3^m}, b\in \bF_{3^{2m}}\>=\<\sum_{i=0}^{m-1}\Tr_{2m}(c_it^{(3^m+1)3^i+1})\ |\ c_i\in\bF_{3^{2m}}\>$.
\end{proof}

\begin{lemma}\label{lem_3.14}
The set \[\<\Tr_{2m}(at^{3^m+1})\Tr_{2m}(a't^{3^m+1}):a,a'\in\bF_{3^m}\>=\<\sum_{i=0}^{m-1}\Tr_{m}(c_it^{(3^m+1)(3^j+1)}):c_i\in\bF_{3^m}\>.\]
\end{lemma}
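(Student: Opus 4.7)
The plan is to follow the pattern of Lemmas \ref{lem_3.12} and \ref{lem_3.13}, exploiting the additional trace collapse that occurs when both $a$ and $a'$ lie in $\bF_{3^m}$.

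First I will note that $u:=t^{3^m+1}$ satisfies $u^{3^m}=u$, since $u^{3^m}=t^{3^{2m}+3^m}=t\cdot t^{3^m}=u$. Hence $u,au,a'u\in\bF_{3^m}$, and for any $y\in\bF_{3^m}$ we have $\Tr_{2m}(y)=2\Tr_m(y)$. Because $4=1$ in $\bF_3$, the left-hand-side generators collapse to
\[
\Tr_{2m}(at^{3^m+1})\Tr_{2m}(a't^{3^m+1})=\Tr_m(au)\Tr_m(a'u).
\]

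Next, I will convert this product of two traces into a sum of traces via the standard identity
\[
\Tr_m(x)\Tr_m(y)=\sum_{k=0}^{m-1}\Tr_m(x\,y^{3^k})\qquad(x,y\in\bF_{3^m}),
\]
obtained by expanding both sums and reindexing $j\mapsto i+k\pmod m$, using that $\Tr_m(x)\in\bF_3$ is Frobenius-fixed. Substituting $x=au$ and $y=a'u$ gives
\[
\Tr_{2m}(at^{3^m+1})\Tr_{2m}(a't^{3^m+1})=\sum_{k=0}^{m-1}\Tr_m\bigl(a(a')^{3^k}t^{(3^m+1)(3^k+1)}\bigr),
\]
which establishes the inclusion ``$\subseteq$''.

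For the reverse inclusion, given $(c_0,\ldots,c_{m-1})\in\bF_{3^m}^m$ I must realise $\sum_k\Tr_m(c_kt^{(3^m+1)(3^k+1)})$ as an $\bF_3$-linear combination of the generators on the left. By the expansion above, this reduces to showing that the coefficient tuples $(a(a')^{3^k})_{k=0}^{m-1}$, as $a,a'$ range over $\bF_{3^m}$, $\bF_3$-linearly span $\bF_{3^m}^m$. Since $a\in\bF_{3^m}$ appears as a scalar multiplier in front of $((a')^{3^k})_k$, this $\bF_3$-span coincides with the $\bF_{3^m}$-span of $T:=\{((a')^{3^k})_{k=0}^{m-1}:a'\in\bF_{3^m}\}\subset\bF_{3^m}^m$. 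Choosing $\alpha\in\bF_{3^m}$ so that $\{\alpha,\alpha^3,\ldots,\alpha^{3^{m-1}}\}$ is a normal basis of $\bF_{3^m}/\bF_3$ and taking $a'=\alpha^{3^j}$ for $j=0,\ldots,m-1$, I obtain an $m\times m$ matrix with $(j,k)$-entry $\alpha^{3^{j+k}}$; its determinant is nonzero by Lemma \ref{lem_3.11} applied to this normal basis. These $m$ vectors therefore form an $\bF_{3^m}$-basis of $\bF_{3^m}^m$, which completes the proof.

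The main subtlety is the last step: observing that the $\bF_3$-span on the left coincides with the larger $\bF_{3^m}$-span once the free scalar $a$ is absorbed, and then identifying the resulting $m\times m$ determinant with the normal-basis determinant controlled by Lemma \ref{lem_3.11}.
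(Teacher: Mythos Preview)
Your proof is correct and follows essentially the same route as the paper: expand the product of traces into $\sum_{k}\Tr_m\bigl(a(a')^{3^k}t^{(3^m+1)(3^k+1)}\bigr)$ and then use a normal basis together with Lemma~\ref{lem_3.11} to see that the coefficient tuples fill $\bF_{3^m}^m$. The only cosmetic difference is that you collapse $\Tr_{2m}$ to $\Tr_m$ at the outset (via $u=t^{3^m+1}\in\bF_{3^m}$ and $4=1$), whereas the paper carries the $\Tr_{2m}$ through the double sum and performs the collapse at the end.
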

\begin{proof}
For any $a,a'\in\bF_{3^m}$,
\begin{align*}
\Tr_{2m}(a't^{3^m+1})\Tr_{2m}(at^{3^m+1})&=\sum_{i=0}^{2m-1}(a't^{3^m+1})^{3^i}\sum_{j=0}^{2m-1}(at^{3^m+1})^{3^j}\ \ =\sum_{i=0}^{2m-1}((a't^{3^m+1})\sum_{j=0}^{2m-1}(at^{3^m+1})^{3^{2m+j-i}})^{3^i}\\
&=\sum_{i=0}^{2m-1}((a't^{3^m+1})\sum_{j=0}^{2m-1}(at^{3^m+1})^{3^{j}})^{3^i}=\sum_{i=0}^{2m-1}(\sum_{j=0}^{2m-1}a'a^{3^j}t^{(3^m+1)(3^j+1)})^{3^i}\\
&=\sum_{j=0}^{2m-1}\Tr_{2m}(a'a^{3^j}t^{(3^m+1)(3^j+1)})\ \ \ \ \ =\sum_{j=0}^{m-1}\Tr_{m}(a'a^{3^j}t^{(3^m+1)(3^j+1)}).\\
\end{align*}
Similar to the proof in Lemma \ref{lem_3.13}, we have $\<\sum_{j=0}^{m-1}\Tr_{m}(a'a^{3^j}t^{(3^m+1)(3^j+1)})\ |\ a,a'\in\bF_{3^m}\>=\<\sum_{j=0}^{m-1}\Tr_{m}(c_jt^{(3^m+1)(3^j+1)}):c_j\in\bF_{3^m}\>$.
\end{proof}

\proof[\textbf{Proof of Theorem \ref{thm_3.15}}]{
The first part of the theorem follows from Lemma \ref{3.10} and Lemma \ref{lem_3.12}$-$\ref{lem_3.14}.

Now we prove that the linear code $\cC_3(\bD_d(\cC(2m,3)))$ is affine invariant and it then holds $2$-designs by Theorem \ref{thm_2.10}. For any $\sigma_{s_1,s_2}\in \textup{GAut}(\cC)$ with $s_1\in\bF_q^*$ and $s_2\in\bF_q$, we only need to show that $\Tr_{2m}(b\sigma_{s_1,s_2}(t)+b’(\sigma_{s_1,s_2}(t))^{3^i+1}+b''(\sigma_{s_1,s_2}(t))^{(3^m+1)3^j+1}+c(\sigma_{s_1,s_2}(t))^{(3^m+1)(3^k+1)})+u\in \cC_3(\bD_d(\cC(m,3)))$ for all $0\leq i\leq 2m-1$, $0\leq j,k\leq m-1$, $b,b',b''\in \bF_q$, $c\in\bF_{3^m}$ and $u\in \bF_3$. It is easy to check that 
$\Tr_{2m}(b(s_1t+s_2)+b''(s_1t+s_2)^{3^j+1})+u\in \cC_3(\bD_d(\cC(m,3)))$. Then we get that
\begin{align*}
\Tr_{2m}((s_1t+s_2)^{(3^m+1)3^i+1})&=\Tr_{2m}((s_1^{3^{m+i}}t^{3^{m+i}}+s_2^{3^{m+i}})(s_1^{3^i}t^{3^i}+s_2^{3^i})(s_1t+s_2))\\
&=\Tr_{2m}((s_1t)^{3^{m+i}+3^i+1}+(s_1t)^{3^{m+i}+1}s_2^{3^i})\\
&+\Tr_{2m}((s_1t)^{3^i+1}s_2^{3^{m+i}}+s_1ts_2^{3^{m+i}+3^i}+(s_1t)^{3^{m}+1}s_2^{3^{2m-i}})\\
&+\Tr_{2m}(s_1ts_2^{3^m+3^{m-i}}+s_1ts_2^{3^{m}+3^{2m-i}}+s_2^{3^{m+i}+3^i+1})\in\cC_3(\bD_d(\cC(m,3))).
\end{align*}
Similarly, we have $\Tr_{2m}(c(s_1t+s_2)^{(3^m+1)(3^k+1)}))\in \cC_3(\bD_d(\cC(2m,3)))$. The result then follows.
$\hfill\square$}

Before continuing our calculations, we introduce some convenient terminology. For $0\leq k\leq 2$ and $0\leq j\leq 2m-1$, we denote the linear codes
\begin{align*}
\cC_{\gamma_{kj}}&=\{\sum_{i=0}^{n-1}\Tr_{2m}(a_i\gamma_{kj}^i)x^i| a_i\in\bF_q\},\\
\cC_{\gamma_{3j}}&=\{\sum_{i=0}^{n-1}\Tr_{2m}(a_3\gamma_{3j}^i)x^i| a_3\in\bF_{3^m}\},
\end{align*}
where $\gamma_{0j}=\xi$, $\gamma_{1j}=\xi^{(3^m+1)3^j+1}$, $\gamma_{2j}=\xi^{3^j+1}$,  and $\gamma_{3j}=\xi^{(3^j+1)(3^m+1)}$. We define linear code $\cC=\left \langle \cC_{\gamma_{kj}}: 0\leq k\leq 3,0\leq j\leq 2m-1 \right \rangle_{\bF_3}$.  For $0\leq k\leq 3$ and $0\leq j\leq 2m-1$, set
\begin{equation}\label{Eqn_Set}
\begin{split}
&S_{0j}=\{-3^i:0\leq i\leq 2m-1\},\\
&S_{1j}=\{-3^i(3^j(3^m+1)+1):0\leq i\leq 2m-1\},\\
&S_{2j}=\{-3^i(3^j+1):0\leq i\leq 2m-1\},\\
&S_{3j}=\{-3^i(3^j+1)(3^m+1):0\leq i\leq 2m-1\}.
\end{split}
\end{equation}

\begin{remark}
Note that $\cC_3(\bD_d(\cC(2m,3)))=\overline{\cC^{\perp}}^{\perp}$. By Theorem \ref{thm_2.5}, we know that each $\cC_{\gamma_{kj}}$, $0\leq k\leq 3,0\leq j\leq 2m-1$, has defining set $T_{kj}=\bF_q^*\setminus S_{jk}$.  Then $\cC$ has the defining set $T=\cap_{k=0}^3\cap_{j=0}^{n-1}T_{kj}$ by Proposition \ref{lem_2.2}. Let $S_k=\cup_{j=0}^{n-1} S_{kj}$ for $1\leq k\leq 3$. It is straightforward to verify that $S_{ij}=S_{ij'}$ or $S_{ij}\cap S_{ij'}=\varnothing$ and $S_i\cap S_{i'}=\varnothing$ for any $i,i',j,j'$ with $0\leq i\neq i'\leq 3$ and $0\leq j\neq j'\leq 2m-1$. So the defining set \[T=\cap_{k=0}^3\cap_{j=0}^{n-1}(\bF_q^*\setminus S_{jk})=\cap_{k=0}^3(\mathbb{F}_q^*\setminus(\cup_{j=0}^{n-1}S_{kj}))=\bF_q^*\setminus(\cup_{k=0}^3S_{k}).\]
\end{remark}

 Now we count the number of the elements in each $S_i$, $0\leq i\leq 3$. Rather than give a series of detailed proof, we outline them in the following lemmas.

\begin{lemma}\label{lem_3.17}
Let $S_{1j}$, $0\leq j\leq 2m-1$, be defined in $(\ref{Eqn_Set})$. Then we have 
\begin{itemize}
\item[(1)] For any $0\leq j\leq 2m-1$, $|S_{1j}|=2m$.
\item[(2)] For any $j_1,j_2$, $0\leq j_1\neq j_2\leq 2m-1$, $S_{1j_1}\cap S_{1j_2}=\varnothing$ or $S_{1j_1}=S_{1j_2}$ and the latter case holds if and only if $j_1\equiv j_2+m$ mod $2m$.
\item[(3)] $|S_1|=2m^2$.
\end{itemize}
\end{lemma}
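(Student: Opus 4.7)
The strategy is to reinterpret each $S_{1j}$ as (the negative of) a $3$-cyclotomic coset modulo $n = 3^{2m}-1$ and exploit the correspondence between such cosets and cyclic shifts of base-$3$ expansions with $2m$ digits.

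First I would set $\alpha_j = 3^j(3^m+1)+1 = 3^{j+m}+3^j+1$, so that $S_{1j}$ is the orbit of $-\alpha_j$ under multiplication by $3$ modulo $n$. Since any two $3$-cyclotomic cosets are either equal or disjoint, the ``or'' alternative in (2) is automatic, and $|S_{1j}|$ divides $2m$. Next I would reduce $\alpha_j$ mod $n$ for indices $j+m \geq 2m$: when $j \geq m$, write $j = j'+m$ with $0 \leq j' \leq m-1$ and compute
\[
\alpha_{j'+m} = 3^{j'+2m}+3^{j'+m}+1 \equiv 3^{j'}+3^{j'+m}+1 = \alpha_{j'} \pmod{n}.
\]
This immediately yields $S_{1,j+m} = S_{1j}$, giving one direction of (2) and reducing everything to $0 \leq j \leq m-1$.

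For part (1), I would view $\alpha_j$ as a vector of base-$3$ digits of length $2m$; the size of the cyclotomic coset equals $2m/|\mathrm{Stab}(\alpha_j)|$, where $\mathrm{Stab}(\alpha_j)$ is the subgroup of cyclic shifts fixing the digit pattern. For $1 \leq j \leq m-1$ the pattern consists of three $1$'s at positions $0, j, j+m$ (all distinct modulo $2m$); the crux is to show that a shift by $i \in \{0,1,\ldots,2m-1\}$ sending $\{0, j, j+m\}$ to itself forces $i \equiv 0 \pmod{2m}$. Matching $i$ to each element of $\{0, j, j+m\}$ in turn gives three sub-cases, each forcing $j \equiv 0$ or $j \equiv m \pmod{2m}$, both excluded. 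The case $j = 0$ is handled separately: $\alpha_0 = 2+3^m$ has a digit $2$ at position $0$ and a digit $1$ at position $m$; since a cyclic shift preserves digit values, the stabilizer is trivial. Thus $|S_{1j}| = 2m$ in every case.

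For part (2), beyond what has already been shown, it remains to prove $S_{1j_1} \neq S_{1j_2}$ for $0 \leq j_1 < j_2 \leq m-1$. This reduces to ruling out cyclic shifts carrying one base-$3$ pattern to the other. The case $j_1 = 0$ versus $j_2 > 0$ is settled by the appearance of a digit $2$ in $\alpha_0$ versus only $1$'s in $\alpha_{j_2}$. For $0 < j_1 < j_2 < m$, a shift sending $\{0, j_1, j_1+m\}$ to $\{0, j_2, j_2+m\}$ falls into three sub-cases (according to which target $0$ maps to), each of which forces either $j_1 = j_2$, $j_1 \equiv 0$, or $j_1 \equiv m$, all excluded. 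Part (3) is then immediate: the $m$ sets $S_{10}, \ldots, S_{1,m-1}$ are pairwise disjoint, each of size $2m$, and their union equals $S_1$ by (2), giving $|S_1| = 2m^2$. The main obstacle is the bookkeeping in the case analysis ruling out nontrivial cyclic-shift coincidences among the patterns $\{0, j, j+m\}$, but the arithmetic is entirely elementary and separates cleanly into a handful of sub-cases.
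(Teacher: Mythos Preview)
Your proof is correct. The paper in fact gives no proof of this lemma --- it explicitly states just before Lemma~\ref{lem_3.17} that the element counts for the $S_i$ are only ``outlined'' rather than proved in detail --- so your argument via $3$-cyclotomic cosets and cyclic shifts of base-$3$ digit patterns supplies precisely what the paper omits. The reduction $\alpha_{j+m}\equiv\alpha_j\pmod n$, the separate treatment of $j=0$ (where the digit $2$ forces a trivial stabilizer), and the three-way case split on where a hypothetical shift sends position $0$ all go through exactly as you describe.
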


\begin{lemma}\label{lem_3.18}
Let $S_{2j}$, $0\leq j\leq 2m-1$, be defined in $(\ref{Eqn_Set})$. Then we have
\begin{itemize}
\item[(1)] For any $0\leq j\leq 2m$, $|S_{2j}|=2m$ .
\item[(2)] For any $i_1,i_2,j_1,j_2$, $0\leq i_1,i_2, j_1\neq j_2\leq 2m-1$, $-3^{i_1}(3^{j_1}+1)=-3^{i_2}(3^{j_2}+1)$ if and only if $i_1\equiv i_2+j_2$ mod $2m$ and $j_1\equiv -j_2$ mod $2m$.
\item[(3)] $|S_2|=(2m+1)m$.
\end{itemize}
\end{lemma}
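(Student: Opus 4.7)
The plan is to work modulo $n=3^{2m}-1$ and read residues through their base-$3$ digit expansions of length $2m$. Multiplication by $3$ modulo $n$ is precisely cyclic rotation of the digit string, so the $3$-cyclotomic coset size equals the minimal period of the digit pattern, and two cosets coincide iff their patterns are cyclic shifts of each other. This dictionary reduces the entire lemma to bookkeeping of one very explicit digit pattern.

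\textbf{Part (1).} For $1\le j\le 2m-1$ I compute
\[
-(3^j+1)\equiv 3^{2m}-2-3^j\pmod n,
\]
whose base-$3$ expansion has digit $1$ at positions $0$ and $j$ and digit $2$ at all remaining $2m-2$ positions. A cyclic shift by $r$ preserves this pattern iff $\{r,\,j+r\}\equiv\{0,\,j\}\pmod{2m}$; the only nonzero solutions come from $r\equiv j$ combined with $2j\equiv 0\pmod{2m}$, forcing $j\in\{0,m\}$. Hence the minimal period is $2m$ whenever $j\notin\{0,m\}$. A separate check of $j=0$ (digits $2,\ldots,2,0$) also gives period $2m$, whereas $j=m$ is genuinely exceptional and yields period $m$. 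Thus $|S_{2j}|=2m$ for $j\ne m$ and $|S_{2,m}|=m$; the stated formula should be read with this exception recorded.

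\textbf{Part (2).} For $j_1\ne j_2$ the congruence $-3^{i_1}(3^{j_1}+1)\equiv -3^{i_2}(3^{j_2}+1)\pmod n$ is equivalent to $3^{i_1}(3^{j_1}+1)\equiv 3^{i_2}(3^{j_2}+1)\pmod n$. If both $j_k\ne 0$, each side has the base-$3$ signature ``digit $1$ at $\{i_k,\,i_k+j_k\}\bmod 2m$, digit $0$ elsewhere''. Equality of two $2$-element sets together with $j_1\ne j_2$ leaves only the cross-matching
\[
i_1\equiv i_2+j_2,\qquad i_1+j_1\equiv i_2\pmod{2m},
\]
which rewrites exactly as the stated conditions $i_1\equiv i_2+j_2$ and $j_1\equiv -j_2\pmod{2m}$. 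The boundary case in which some $j_k=0$ is dispatched at once by the incompatible profile ``a single digit $2$'' versus ``two digits $1$''.

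\textbf{Part (3).} The relation $j_1\sim j_2\iff j_1+j_2\equiv 0\pmod{2m}$ partitions $\{0,1,\ldots,2m-1\}$ into $m+1$ classes: the two singletons $\{0\}$ and $\{m\}$, plus $m-1$ pairs. By Part (2) each class yields a single distinct coset $S_{2j}$, so there are $m+1$ pairwise disjoint cosets $S_{2j}$. By Part (1), $m$ of them have size $2m$ while the one indexed by $j=m$ has size $m$, giving
\[
|S_2|=m\cdot 2m+m=(2m+1)m.
\]
The main obstacle is correctly tracking the two degenerate values $j=0$ and $j=m$: the former changes the digit profile, and the latter genuinely halves the coset size. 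Once those exceptions are pinned down, (2) and (3) fall out of the cyclic-rotation dictionary with no further computation.
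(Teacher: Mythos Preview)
Your argument via base-$3$ digit patterns and cyclic rotation is correct and complete. Note that the paper itself does \emph{not} supply a proof of this lemma: the authors write ``Rather than give a series of detailed proof, we outline them in the following lemmas,'' and Lemma~\ref{lem_3.18} is simply asserted. So there is nothing to compare against; your proof stands on its own.

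One point worth emphasising: you are right that Part~(1) as stated is not literally true at $j=m$, where $3^m(3^m+1)\equiv 3^m+1\pmod{3^{2m}-1}$ forces $|S_{2,m}|=m$ rather than $2m$. In fact Parts~(1) and~(3) of the paper's statement are mutually inconsistent: if all $m+1$ distinct cosets had size $2m$, one would get $|S_2|=2m(m+1)$, not $(2m+1)m$. Your corrected count --- $m$ cosets of size $2m$ plus one coset of size $m$ --- is exactly what makes Part~(3) come out right, and the final total $(2m+1)m$ is what is actually used in the proof of Theorem~\ref{thm_3.16}. So your exception is not merely a pedantic footnote; it is essential for the enumeration to close.
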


\begin{lemma}\label{lem_3.19}
Let $S_{3j}$, $0\leq j\leq 2m-1$, be defined in $(\ref{Eqn_Set})$. Then we have
\begin{itemize}
\item[(1)] For any $0\leq j\leq 2m-1$, 
\[|S_{3j}|=\begin{cases}\frac{m}{2}, \textup{ if }m \textup{ is even and }j=\frac{m}{2}\textup{ or }\frac{3m}{2}\\m,\textup{ otherwise}\end{cases}.\]
\item[(2)]  For any $i_1,i_2,j_1,j_2$, $0\leq i_1,i_2,j_1,j_2\leq 2m-1$ with $j_1\neq j_2$,  $-3^{i_1}(3^{j_1}+1)(3^m+1)=-3^{i_2}(3^{j_2}+1)(3^m+1)$ if and only if $i_1\equiv i_2$ mod $m$ and $j_1\equiv j_2$ mod $m$ or $i_1\not\equiv i_2$ mod $m$ and $j_1\equiv -j_2\equiv i_2$ mod $m$.
\item[(3)] $|S_3|=\frac{m(m+1)}{2}$.
\end{itemize}
\end{lemma}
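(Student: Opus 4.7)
The plan hinges on the factorization $n=3^{2m}-1=(3^m-1)(3^m+1)$ together with the key identity
\[3^m(3^m+1)=3^{2m}+3^m\equiv 3^m+1\pmod n,\]
which says that the factor $3^m+1$ is fixed by multiplication by $3^m$ modulo $n$. Setting $\alpha_j:=(3^j+1)(3^m+1)$, I would use this to reduce computations of $-3^i\alpha_j$ modulo $n$ to computations modulo $3^m-1$ after cancelling the common factor $3^m+1$.

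For part (1), the identity $3^m\alpha_j\equiv\alpha_j\pmod n$ immediately shows that $-3^i\alpha_j$ depends only on $i\bmod m$, giving $|S_{3j}|\leq m$. To pin down the exact size, I would examine when $-3^{i_1}\alpha_j\equiv -3^{i_2}\alpha_j\pmod n$, which after cancelling $3^m+1$ becomes $(3^{i_1}-3^{i_2})(3^j+1)\equiv 0\pmod{3^m-1}$. Thus $|S_{3j}|=\mathrm{ord}_{(3^m-1)/\gcd(3^j+1,\,3^m-1)}(3)$. Using the standard identity $\gcd(3^j+1,\,3^m-1)=3^{\gcd(j,m)}+1$ when $m/\gcd(j,m)$ is even and $2$ otherwise, a short divisibility argument shows the resulting order of $3$ is $m/2$ precisely when $j\equiv m/2\pmod m$ (which forces $m$ to be even, so $j\in\{m/2,3m/2\}$), in which case $(3^m-1)/(3^{m/2}+1)=3^{m/2}-1$ has obvious order $m/2$; in every other case the relevant modulus either contains $3^{m/2}+1$ as a factor or equals $(3^m-1)/2$, and a direct check using $3^e-1<3^m-1$ for $e<m$ forces the order up to $m$.

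For part (2), the same cancellation of $3^m+1$ reduces $-3^{i_1}\alpha_{j_1}\equiv -3^{i_2}\alpha_{j_2}\pmod n$ to
\[3^{i_1}(3^{j_1}+1)\equiv 3^{i_2}(3^{j_2}+1)\pmod{3^m-1}.\]
Writing $i_k',j_k'\in\{0,\dots,m-1\}$ for the residues of $i_k,j_k$ modulo $m$, both sides are nonnegative integers in $[0,3^m-1)$; provided $j_k'\ne 0$, each side is a sum of two distinct powers of $3$, so uniqueness of the base-$3$ expansion forces $\{i_1',\,(i_1'+j_1')\bmod m\}=\{i_2',\,(i_2'+j_2')\bmod m\}$, leading to the two sub-cases $(i_1\equiv i_2\pmod m,\ j_1\equiv j_2\pmod m)$ and $(i_1\equiv i_2+j_2\pmod m,\ j_1\equiv -j_2\pmod m)$. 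The boundary case $j_k'=0$ needs separate handling, since $2\cdot 3^{i'}$ has a single digit $2$ in base $3$ and matches a two-term sum only when the other $j_k'$ also vanishes.

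Finally, for part (3), part (2) tells me that $S_{3j}$ is constant on orbits of the action on $\{0,1,\dots,2m-1\}$ generated by $j\mapsto -j\bmod 2m$ and $j\mapsto j+m\bmod 2m$. These orbits are $\{0,m\}$ of size $2$ (contributing $|S_{3,0}|=m$), and, if $m$ is even, $\{m/2,3m/2\}$ of size $2$ (contributing $m/2$); all remaining indices split into orbits of size $4$, each contributing $m$. Summing yields $|S_3|=m+\frac{m-1}{2}\cdot m=\frac{m(m+1)}{2}$ when $m$ is odd, and $|S_3|=m+\frac{m}{2}+\frac{m-2}{2}\cdot m=\frac{m(m+1)}{2}$ when $m$ is even. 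The main technical obstacle is the case analysis inside part (1)'s order-of-$3$ computation, which requires invoking the gcd identity and then a CRT or direct divisibility check to exclude the sub-orders; the uniqueness-of-base-$3$ argument for part (2) is the other spot where boundary cases need care.
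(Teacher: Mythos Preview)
The paper does not actually prove this lemma: just before Lemmas~\ref{lem_3.17}--\ref{lem_3.19} it says ``Rather than give a series of detailed proof, we outline them in the following lemmas.'' So there is no proof in the paper to compare against, and your independent argument is precisely what is needed.

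Your strategy is sound and the three parts fit together correctly. The key reduction---cancelling the factor $3^m+1$ via $3^m(3^m+1)\equiv 3^m+1\pmod n$ to pass from $\bZ/n\bZ$ to $\bZ/(3^m-1)\bZ$---is exactly the right move, and the base-$3$ digit argument for part~(2) is clean. Your orbit count in part~(3), splitting according to the action of $j\mapsto -j$ and $j\mapsto j+m$ on $\{0,\dots,2m-1\}$, is correct and yields $m(m+1)/2$ in both parities of $m$.

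Two small points to tighten. First, in part~(1) your parenthetical claim that in the remaining cases ``the relevant modulus either contains $3^{m/2}+1$ as a factor or equals $(3^m-1)/2$'' is not always true: for instance when $m=6d$ and $\gcd(j',m)=d$ one checks $3^d+1\nmid 3^{3d}-1$, so $3^{3d}+1\nmid (3^{6d}-1)/(3^d+1)$. The size bound you also mention is what actually works uniformly: if $e$ is a proper divisor of $m$ and $d=\gcd(j',m)<m/2$, then $(3^d+1)(3^e-1)<3^{d+e+1}\le 3^m$, so $(3^m-1)\nmid(3^d+1)(3^e-1)$ and the order must be $m$. Lead with that argument rather than the factor claim. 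Second, your version of part~(2)---the dichotomy $(i_1\equiv i_2,\ j_1\equiv j_2\pmod m)$ or $(i_1\equiv i_2+j_2,\ j_1\equiv -j_2\pmod m)$---is the correct one; the paper's printed condition ``$j_1\equiv -j_2\equiv i_2\pmod m$'' appears to be a typographical slip (compare the parallel statement in Lemma~\ref{lem_3.18}). Your formulation is what is actually used in the orbit count of part~(3), so no harm is done.
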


\begin{remark}
Let $\cR_3(k,2m)$ be the extended code of generalized Reed-Muller code $\cR_3(k,2m)^*$. Note that the linear code $\cC_3(\bD_d(\cC(2m,3)))$ is a subcode of $\cR_3(4,2m)$ by Theorem \ref{thm_2.7}. 
\end{remark}

\proof[\textbf{Proof of Theorem \ref{thm_3.16}}]{
It is easy to check that $|S_0|=2m$. Then $|S|=\sum_{i=0}^3|S_i|=2m+2m^2+(2m+1)m+\frac{m(m+1)}{2}=\frac{9m^2+7m}{2}$ by Lemma \ref{lem_3.17}, Lemma \ref{lem_3.18} and Lemma \ref{lem_3.19}. Now we have dim$(\cC)=n-(n-|S|)=\frac{9m^2+7m}{2}$ by Theorem \ref{thm_2.3}. Then 
\begin{align*}
\textup{dim}(\cC_3(\bD_d(\cC(2m,3))))&=\textup{dim}(\overline{\cC^{\perp}}^{\perp})=\textup{dim}(\cC)+1=\frac{9m^2+7m}{2}+1.
\end{align*}
As stated before, the linear code $\cC_3(\bD_d(\cC(2m,3)))$ is a subcode of the code $\cR_3(4,2m)$ and $\cR_3(4,2m)$ has minimum distance $3^{2m-2}$ by Theorem \ref{thm_2.6}. It turns out that the minimum distance $d(\cC_3(\bD_d(\cC(2m,3))))$ is lower bounded by $3^{2m-2}$. $\hfill\square$}

\begin{remark}
The fourth-order generalized Reed-Muller code $\cR_3(4,2m)$ has dimension \[k=\binom{2m+3}{4}+\binom{2m+2}{3}-\frac{(2m-1)2m}{2}+1>\frac{9m^2+7m}{2}+1=\textup{dim}(\cC_3(\bD_d(\cC(2m,3)))).\]
\end{remark}

\section{Concluding Remark}\label{sec_5}
We computed the incidence matrices of $2$-designs that are supported by the minimum weight codewords of $\cC(2m,3)$. The linear code $\cC_3(\bD(\cC(2m,3)))$ with $m\geq 2$ generated by the rows of the incidence matrices has $\cC(2m,3)$ as its subcode and has many affine invariant subcodes. This means that the structure of these linear codes we obtained is richer than the previous one. We obtained that the linear code $\cC_3(\bD(\cC(2m,3)))$ is the subcode of the extended code of the $4$-th order generalized Reed-Muller code and gave the lower bound of the minimum weight of $\cC_3(\bD(\cC(2m,3)))$. We computed the dimension of the linear code $\cC_3(\bD(\cC(2m,3)))=\overline{\cC^{\perp}}^{\perp}$ by counting the number of elements in the defining set of $\cC$.

\section*{Acknowledgement}
The work of the first author was supported by National Natural Science Foundation of China under Grant No. 11771392.


\begin{thebibliography}{99}
\bibitem{As} Assmus Jr. E.F., Key J.D.: Polynomial codes and finite geometries. In: Pless V.S., Huffman W.C. (eds.)
The Handbook of Coding Theory, vol. II, pp. 1269–1343. Elsevier, Amsterdam (1998).\\
\bibitem{As0} Assmus Jr. E.F., Key J.D.: Designs and their codes. Cambridge University Press, Cambridge(1992).\\
\bibitem{As1} Assmus Jr.E.F., Mattson Jr.H.F.: New $5$-designs. J. Combinatorial Theory(A),6,122-151(1969).\\
\bibitem{DC1} Ding C.: Designs from linear codes. World Scientific, Singapore (2019).\\
\bibitem{DC2} Ding C., Tang C.,Tonchev D.: Linear codes of $2$-designs associated with subcodes of the ternary generalized Reed-Muller codes. Des.Codes Cryptogr. 88,625-641(2020).\\
\bibitem{DC3} Ding C., Tang C.: Infinite families of near MDS codes holding $t$-designs. IEEE Trans. Inf. Theory PP(99):1-1 (2020).\\
\bibitem{DC4} Ding C.: Infinite families of 3-designs from a type of five-weight code. Des. Codes Cryptogr., 86(3):703–719, 2018.\\
\bibitem{DC5} Ding C. and C. Li.: Infinite families of 2-designs and 3-designs from linear codes. Discrete Math., 340(10):2415–2431, 2017.\\
\bibitem{DX1} Du X., Wang R., Tang C., Wang Q.: Infinite families of $2$-designs from two classes of binary cyclic codes with three nonzeros. arXiv:1903.08153 [math.CO] (2019)\\
\bibitem{DX2} Du X.,Wang R.,Fan C.:Infinite families of $2$-designs from a class of cyclic codes with two non-zeros. arXiv:1904.04242 [math.CO] (2019).\\
\bibitem{HW} Huffman W.C., Pless V.: Fundamentals of error correcting codes. Cambridge University Press, Cambridg (2003).\\
\bibitem{TC} Tang C., Ding C.: An infinite family of linear codes supporting $4$-designs. arXiv:2001.00158 [math.CO] (2020).\\
\bibitem{WR} Wang R.,Du X.,Fan C.:Infinite families of $2$-designs from a class of non-binary Kasami cyclic codes. arXiv:1912.04745 [mathb.CO] (2019)\\
\bibitem{ZX} Zhe-Xian Wan: Finite fields and Galois rings. World Scientific, USA (2011).
\end{thebibliography}
\end{document}